\def\sizeof#1{\left|#1  \right|}
\def\eps{\epsilon}
\def\trace#1{\mathrm{Tr} \left(#1 \right)}
\def\norm#1{\left\| #1 \right\|}
\newtheorem{problem}{Problem}
\newtheorem{theorem}{Theorem}[section]
\newtheorem{lemma}[theorem]{Lemma}
\newtheorem{definition}[theorem]{Definition}
\def\calC{\mathcal{C}}
\def\calG{\mathcal{G}}
\def\calS{\mathcal{S}}
\def\calH{\mathcal{H}}
\def\calI{\mathcal{I}}
\def\norm#1{\left\| #1 \right\|}
\def\kh#1{\left( #1 \right)}
\newcommand{\removelatexerror}{\let\@latex@error\@gobble}
\newcommand{\rea}{\mathbb{R}}
\newcommand{\LaplSolver}{\textsc{Solver}\xspace}
\newcommand{\InforCenMin}{\textsc{InforCenMin}\xspace}
\newcommand\LL{\bm{\mathit{L}}}
\def\trace#1{\mathrm{Tr} \left(#1 \right)}
\def\sizeof#1{\left|#1  \right|}
\def\aa{\pmb{\mathit{a}}}
\newcommand{\ExactSM}{\textsc{ExactSM}\xspace}
\newcommand{\ApproxiSC}{\textsc{ApproxiSC}\xspace}
\newcommand{\FastSC}{\textsc{FastICM}\xspace}
\newcommand\xx{\boldsymbol{\mathit{x}}}
\newcommand\aaa{\boldsymbol{\mathit{a}}}
\newcommand\bb{\boldsymbol{\mathit{b}}}
\newcommand\cc{\boldsymbol{\mathit{c}}}
\newcommand\dd{\boldsymbol{\mathit{d}}}
\newcommand\ee{\boldsymbol{\mathit{e}}}
\renewcommand\AA{\boldsymbol{\mathit{A}}}
\newcommand\BB{\boldsymbol{\mathit{B}}}
\newcommand\CC{\boldsymbol{\mathit{C}}}
\newcommand\JJ{\boldsymbol{\mathit{J}}}
\newcommand\DD{\boldsymbol{\mathit{D}}}
\newcommand\PP{\boldsymbol{\mathit{P}}}
\newcommand{\InforCenMinD}{\textsc{InforCenMinD}\xspace}
\newcommand{\lea}{v}
\newcommand{\del}{P}
\newcommand{\SC}{\calS}
\newcommand{\samv}{{\tilde{V}}}
\DeclareMathOperator*{\argmin}{arg\,min}
\DeclareMathOperator*{\argmax}{arg\,max}
\begin{document}
\title{A Fast Algorithm for Moderating Critical Nodes via Edge Removal}

\author{Changan~Liu, 
        Xiaotian~Zhou, 
        Ahad N.~Zehmakan, 
        and~Zhongzhi~Zhang,~\IEEEmembership{Member,~IEEE}

\IEEEcompsocitemizethanks{
\IEEEcompsocthanksitem This work was supported by the National Natural Science Foundation of China (No. U20B2051 and No. 62372112). \textit{(Corresponding author: Zhongzhi~Zhang.)} 

\IEEEcompsocthanksitem Changan~Liu, Xiaotian~Zhou,  and Zhongzhi Zhang  are with Shanghai Key Laboratory of Intelligent Information Processing, School of Computer Science, Fudan University, Shanghai 200433, China. Changan~Liu is also with Center for Complex Decision Analysis, Fudan University, Shanghai 200433, China.
\protect\\
E-mail: 19110240031@fudan.edu.cn, 20210240043@fudan.edu.cn,
zhangzz@fudan.edu.cn 
\IEEEcompsocthanksitem Ahad N. Zehmakan is with the School of Computing, the Australian National University, Canberra, Australia. 
\protect\\
E-mail: ahadn.zehmakan@anu.edu.au
}
\thanks{Manuscript received xxxx; revised xxxx.}
}

\markboth{IEEE Transactions on Knowledge and Data Engineering,~Vol.~xx, No.~xx, May~2023}%
{Shell \MakeLowercase{\textit{et al.}}: Bare Demo of IEEEtran.cls for Computer Society Journals}

\IEEEtitleabstractindextext{%
\begin{abstract}
Critical nodes in networks are extremely vulnerable to malicious attacks to trigger negative cascading events such as the spread of misinformation and diseases. Therefore, effective moderation of critical nodes is very vital for mitigating the potential damages caused by such malicious diffusions. The current moderation methods are computationally expensive. Furthermore, they disregard the fundamental metric of information centrality, which measures the dissemination power of nodes.

We investigate the problem of removing $k$ edges from a network to minimize the information centrality of a target node $\lea$ while preserving the network's connectivity. We prove that this problem is computationally challenging: it is NP-complete and its objective function is not supermodular. However, we propose three approximation greedy algorithms using novel techniques such as random walk-based Schur complement approximation and fast sum estimation. One of our algorithms runs in nearly linear time in the number of edges.

To complement our theoretical analysis, we conduct a comprehensive set of experiments on synthetic and real networks with over one million nodes. Across various settings, the experimental results illustrate the effectiveness and efficiency of our proposed algorithms.
\end{abstract}

\begin{IEEEkeywords}
Social networks, critical nodes, information diffusion, edge removal, combinatorial optimization.
\end{IEEEkeywords}}

\maketitle

\IEEEdisplaynontitleabstractindextext

\IEEEpeerreviewmaketitle

\IEEEraisesectionheading{\section{Introduction}\label{sec:introduction}}

\IEEEPARstart{A} {} broad range of dynamic processes on graphs have been analyzed to attain a more profound comprehension of diverse real-world phenomena, such as the spread of misinformation on online social media platforms, the proliferation of computer viruses over the internet, and the dissemination of diseases among individuals~\cite{songced2021,zareie2022rumour,wang2023lightweight,freitas2022graphrobustness}. As a result, there has been a burgeoning interest in investigating the influence of the underlying graph structure on various characteristics of these dynamic processes~\cite{bertagnolli2021quantifying,sun2023scalable,ren2018dismantling,YiZhPa20}. Specifically, a considerable amount of attention has been focused on comprehending to what extent certain objectives can be achieved through the manipulation of the network structure. Examples of such manipulation strategies comprise eliminating nodes (such as blocking an account on an online social platform or administering a vaccine to an individual), adding edges (such as link recommendation in online social networks or constructing a physical link between two routers), or removing edges (such as restricting two individuals from meeting through quarantine measures or not exposing the posts from one user to another in an online platform)~\cite{ren2018dismantling,shan2018improve,EnMoBr12,MaKa09}. 
Furthermore, the intended objective can vary widely, ranging from minimizing the number of nodes infected by a virus to maximizing the fairness of workload among routers across different internet service providers, or reducing the proportion of users exposed to misinformation~\cite{zareie2022rumour,MaKa09,yan2019rumor,freitas2022graphrobustness,tsioutsiouliklis2022link}.

The real-world networks exhibit heterogeneous nature with critical nodes being far more essential in network structure and function~\cite{LU20161,ren2018dismantling}, such as information diffusion and system stability~\cite{ren2018dismantling,hofmann2015leadership,fan2020finder}. This confers an unbridled degree of power to such nodes, which could potentially result in significant financial, sociological, and political damages. For instance, subsequent to the hack of The Associated Press Twitter account, a false report was disseminated claiming that ``Breaking: Two Explosions in the White House and Barack Obama is injured''. This rumor resulted in losses of 10 billion USD within a few hours and caused the United States stock market to crash within minutes~\cite{peter2013bogus}. As another example, infectious diseases cause 10 million deaths each year globally, accounting for 23\% of the total disease related deaths, where critical nodes play a massive role in the extent of the spread~\cite{nandi2016methods}. Consequently, there has been an increasing interest in shedding light on managing and alleviating the impact of a set of target nodes, especially influential nodes~\cite{wang2013negative,taninmics2020minimizing,khalil2014scalable,yao2015minimizing}. 

Another application space where the problem of mitigating some target nodes have gained substantial attention is the realm of privacy protection in networks. In this setup, the goal is to protect the privacy of users or conceal critical entities in the network by implementing structural anonymization~\cite{waniekHidingIndividualsCommunities2018, ji2019greedily}. Please refer to  Sections~\ref{sec:related01} and~\ref{sec:related02} for more details on these topics when appropriate.

Two commonly employed graph operations to attain the aforementioned objectives are node and edge removal.
Edge removal has garnered greater attention recently~\cite{yao2015minimizing,tong2017efficient,kuhlman2013blocking}, since it is less intrusive (i.e., disrupts the original functionality and flow of the network less aggressively) and provides controlling power in a more granular level (note that usually removing a node is equivalent to removing all its adjacent edges). In this work, we focus on edge removal as well.

Previous studies~\cite{waniekHidingIndividualsCommunities2018,jienhanceprivacy2019,ji2019greedily,lanetprotect2021} have investigated the problem of removing a fixed number of edges to achieve an objective with respect to a subset of target nodes. The objective could vary from minimizing the spreading power of the target nodes under a specific information diffusion model such as the Independent Cascade (with the goal of halting the propagation of a piece of misinformation)~\cite{khalil2014scalable, yao2015minimizing,kuhlman2013blocking,chen2010scalable,yan2019rumor} to minimizing the centrality of the target nodes measured by degree~\cite{waniekHidingIndividualsCommunities2018} or closeness~\cite{ji2019greedily} (with the aim of concealment). Additionally, numerous algorithms, predominantly heuristics, have been put forth~\cite{chen2010scalable,yan2019rumor}.

The existing works have two major limitations. Firstly, despite a plethora of centrality indices proposed in the literature~\cite{LU20161}, these prior works do not consider information centrality~\cite{stephenson1989rethinking, newmanMeasureBetweennessCentrality2005}, in spite of its obvious advantages.  For example, information centrality captures the way node-to-node transmission takes place (especially, on social networks)~\cite{brandesCentralityMeasuresBased2005} and possesses higher discrimination power than others~\cite{shan2018improve,li2019current}. Again for instance, information centrality has been applied to various fields, such as estimation from relative measurements~\cite{BaHe08} and leader selection for noisy consensus~\cite{PoYoScLe16}. 
Secondly, they are computationally relatively expensive which renders them impractical in various real-world scenarios. Suppose a misinformation detection algorithm has spotted a potential rumor (as the one from above about White House) and the goal is to moderate the spreading power of the initiating node 
by temporarily removing some edges (i.e., not exposing the content from some users to some others), then having a very fast mitigation algorithm in place is very crucial. (Note that pre-calculations are not usually a viable solution here, since these graphs undergo constant reformulation). Another example of this would be an internet service provider seeking to control the network traffic in response to a malfunctioning or malevolent router.

In light of the above limitations, our study aims to fill this gap by devising an objective function capturing information centrality and provide an algorithm that can handle networks of million nodes. In our formulation of the problem of moderating critical nodes, the objective is to remove $k$ edges to minimize the information centrality of a target node, while preserving network connectivity. (The constraint of network connectivity ensures the preservation of network's functionality~\cite{fan2020finder,ren2018dismantling}, a consideration also addressed in other works investigating problems related to edge removal~\cite{lanetprotect2021,ji2019greedily,gusrialdi2018distributed}.)

We prove that this problem is NP-complete. Furthermore, while its objective function is monotonically decreasing, it is not supermodular, posing difficulties in solving it using traditional greedy algorithms. However, we still adopt the standard greedy algorithm in our setup since it has been frequently observed to deliver satisfactory results for many non-supermodular problems~\cite{guo2019targeted,bian2017guarantees}. Despite its simplicity and efficacy, it requires execution of matrix inversion operations, incurring a prohibitive computational cost and rendering it unfeasible for large networks. As the first step towards a faster algorithm, we use the random walk-based approximate Schur complement method~\cite{DuGaGoPe19} to present a faster algorithm, called \ApproxiSC.
To speed up the computation even further, we also leverage the sum estimation method~\cite{feigesum2006,sum2022}, which allows us to provide the algorithm \FastSC which runs in nearly linear time in the number of edges, as our main contribution. 

The rest of our paper proceeds as follows. We first introduce some necessary preliminaries related to our work in Section~\ref{sec:prelimi}. Then, we provide an exact formulation of our problem in Section~\ref{sec:profor} and give an overview of our main theoretical and experimental findings and the techniques used in Section~\ref{sec:contr}. We discuss related works in Section~\ref{sec:related}. Then, we study the computational complexity of our problem and prove related properties of the corresponding objective function in Section~\ref{sec:complexity}. In Section~\ref{sec:detergreedy}, we present the deterministic greedy algorithm, followed by the fast greedy algorithms in Section~\ref{sec:fastgreedy}. We report our performance experiments evaluating the efficiency and effectiveness of our algorithms in Section~\ref{sec:experiment} and conclude the paper in Section~\ref{sec:conclusion}. 

\section{Preliminaries}\label{sec:prelimi}
In this section, we introduce some useful notations and tools to facilitate the description of our problem and algorithms.
\subsection{Notations}\label{sec:notation}
We use normal lowercase letters like $a,b,c$ to denote scalars in $\rea$, normal uppercase letters like $A,B,C$ to denote sets, bold lowercase letters like $\aaa, \bb, \cc$ to denote vectors, and bold uppercase letters like $\AA, \BB, \CC$ to denote matrices. Let $\JJ$ be the matrix of appropriate dimensions with all entries being ones. We use $\AA_{[S,F]}$ to denote the submatrix of $\AA$ with row indices in $S$ and column indices in $F$. We write $\AA_{ij}$ to denote the entry at row $i$ and column $j$ of $\AA$ and $\aa_i$ to denote the $i$-th element of vector $\aa$. We use $\AA_{-T}$ to denote the submatrix of $\AA$ obtained from $\AA$ by deleting rows and columns corresponding to elements in set $T$, and use $\aaa_{-T}$ to denote the vector obtained from $\aaa$ by deleting elements in set $T$. An $n \times n$ matrix $\AA$ is positive semi-definite if $\xx^{\top} \AA \xx \geq 0$ holds for all $\xx \in \mathbb{R}^{n}$. For two positive semi-definite matrices $\AA$ and $\BB$, we use $\BB \preceq \AA$ to denote that matrix $\AA-\BB$ is a semi-definite matrix.
Below, we introduce the notion of $\epsilon$-approximation.
\begin{definition}\label{def:eps-appr-mat}
Given two positive semi-definite matrices $\AA$ and $\BB$ and a real number $\epsilon\in (0, 1)$, we say that $\BB$ is an $\epsilon$-approximation of $\AA$ (abbr. $\BB \approx_{\epsilon} \AA$)
\begin{equation*}
   (1-\epsilon)\AA \preceq \BB \preceq (1+\epsilon)\AA.
\end{equation*}
\end{definition}
If $\AA$ and $\BB$ are degenerated to positive scalars $a,b > 0$, $b$ is called an $\eps$-approximation of $a$ (abbr. $b \approx_{\eps} a$) if $(1-\eps)\, a \leq b \leq (1+\eps)\, a$.

\subsection{Graphs and Related Matrices}
Consider a connected undirected graph $\calG = (V,E)$ where $V$ is the set of nodes and $E \subseteq V \times V$ is the set of edges. Let $n = |V|$ and $m = |E|$ denote the number of nodes and the number of edges, respectively. The Laplacian matrix of $\calG$ is the symmetric matrix $\LL = \DD - \AA$, where $\AA$ is the adjacency matrix whose entry $\AA_{ij}=1$ if node $i$ and node $j$ are adjacent, and $\AA_{ij}=0$ otherwise, and $\DD$ is the degree diagonal matrix $\DD=\text{diag}(\dd_1,\cdots,\dd_n)$ where $\dd_i$ is the degree of node $i$. 
We write $\ee_i$ to denote the $i$-th standard basis vector. We fix an arbitrary orientation for all edges in $\calG$, and for each edge $e=(u,v)\in E$, we define $\bb_e = \bb_{uv} = \ee_{u}-\ee_{v}$, where $u$ and $v$ are head and tail of $e$, respectively. Then, $\LL$ can be rewritten as $\LL = \sum\nolimits_{e\in E}\bb_e \bb_e^\top$. 
Matrix $\LL$ is singular and positive semidefinite with its Moore-Penrose pseudoinverse being $\LL^\dag = \kh{\LL +\frac{1}{n}\JJ}^{-1}-\frac{1}{n}\JJ$.  
The transition matrix is $\PP=\DD^{-1}\AA$, which is a row-stochastic matrix.  

For any non-empty node sets $F\subset V$ and $T=V \backslash F$, we can partition the Laplacian matrix $\LL$ into 4 blocks:
\begin{align*}
    \LL:=\left[\begin{array}{ll}
\LL_{[F, F]} & \LL_{[F, T]} \\
\LL_{[T, F]} & \LL_{[T, T]}
\end{array}\right]
\end{align*}
Then, the \textit{Schur complement} of graph $\calG$ onto node set $T$, denoted by $\SC(T)$, is the matrix in closed form as
\begin{align*}
    \SC(T)=\LL_{[T, T]}-\LL_{[T, F]} \LL_{[F, F]}^{-1} \LL_{[F, T]}.
\end{align*}
$\SC(T)$ is a Laplacian matrix of a graph with node set $T$, and we use $\calG(\SC(T))$ to denote the corresponding graph of $\SC(T)$.

\subsection{Information Centrality}\label{sec:information}

Given a graph $\calG=(V,E)$ and two vertices $x,y\in V$, the effective resistance, which is a form of Euclidean distance~\cite{doyle_snell_1984}, $\mathcal{R}^{\calG}_{xy}$ between nodes $x$ and $y$ is defined as $\mathcal{R}^{\calG}_{xy} = \bb_{xy}^\top \LL^\dag \bb_{xy}$. We refer to the maximum value of pairwise effective resistance in a graph as the effective resistance diameter, and denote it by $\phi$. Based on the physical definition of the effective resistance~\cite{doyle_snell_1984}, $\phi$ is less than the diameter of the graph, which is often small in real-life networks~\cite{WaSt98}.

For any node set $T\subset V$, the \textit{Schur complement} onto $T$ can be viewed as a vertex sparsifier that preserves pairwise effective resistance~\cite{DoBu12,DuGaGoPe19}, which means
\begin{align}
    \mathcal{R}_{xy}^{\calG}=\mathcal{R}_{xy}^{\calG(\SC(T))},
\end{align}
holds for any pair of nodes $x,y \in T$.

For a network $\calG = (V,E)$ and a node $\lea \in V$, we use $\mathcal{R}_{\lea}^{\calG}$ to denote the sum of effective resistances between $\lea$ and all nodes in $V\backslash \{\lea\}$ (we will refer to $\mathcal{R}_{\lea}^{\calG}$ as the resistance distance of node $\lea$ throughout the paper), i.e., 
$
    \mathcal{R}_{\lea}^{\calG} = \sum_{u \in V\backslash\{\lea\}} \mathcal{R}_{uv}^{\calG},
$
which can be restated~\cite{BOZZOresistance2013} in a matrix form as
\begin{align}\label{eqR1}
\mathcal{R}_{\lea}^{\calG} = n \LL^\dag_{vv} + \trace{\LL^\dag}.
\end{align}

The information centrality $\calI_{\lea}^{\calG}$ correlates to the resistance distance $\mathcal{R}_{\lea}^{\calG}$~\cite{stephenson1989rethinking,brandesCentralityMeasuresBased2005}, and can be expressed by:

\begin{align}\label{ItoR}
\calI_{\lea}^{\calG} = \frac{n}{\mathcal{R}_\lea^{\calG}}= \frac{n}{n \LL^\dag_{vv} + \trace{\LL^\dag}}
\end{align}
which is defined on connected networks.

We will remove the superscripts of $\mathcal{R}^{\calG}_{xy}$, $\mathcal{R}^{\calG}_{\lea}$ and $\calI_{\lea}^{\calG}$ when $\calG$ is clear from the context.
\subsection{Supermodular Optimization}\label{sec:greedy}

Let $X$ be a finite set, and $2^X$ be the set of all subsets of $X$. Let $f: 2^X \to \mathbb{R}$ be a set function on $X$. Then, $f$ is called monotone decreasing if for any subsets $S \subset H \subset X$, $f(S) > f(H)$ holds. Furthermore, we say function $f$ is supermodular if for any subsets $S \subset H \subset X$ and any element $a \in X \backslash H$, it satisfies $f(S \cup \{a\}) - f(S) \leq f(H \cup \{a\}) - f(H)$.

The standard greedy algorithm has proven to be an effective solution for the cardinality-constrained set function problem in supermodular optimization, with a guaranteed $(1-1/e)$ approximation ratio. However, many crucial set functions do not satisfy the supermodularity requirement. Nevertheless, the greedy algorithm still frequently produces desirable results for a broad range of non-supermodular applications~\cite{bian2017guarantees,guo2019targeted}.

\section{Problem Formulation}\label{sec:profor}

In this section, we formulate the problem of minimizing the information centrality of a target node by removing edges.

Consider a connected unweighted undirected network $\calG=(V,E)$ and a target node $\lea$.
For any edge set $P \subset E$, define $\calG\setminus P=(V,E\backslash P)$ as the remaining graph resulted by removing edges in edge set $P$. Then we define the set function of the information centrality, $\calI_v(P)=\calI_v(\calG\setminus P)$. Similarly, we can define the set function of the sum of the effective resistance $\mathcal{R}_v(P)$. These two definitions are valid whenever the removal of edges in set $P$ maintains the connectivity of the network.

Rayleigh’s monotonicity law~\cite{doyle_snell_1984} asserts that the effective resistance between any pair of nodes will increase when an edge is removed; hence, the resistance distance $\mathcal{R}_v(P)$ is monotonically increasing, and the information centrality $\calI_v(P)$ is monotonically decreasing. Then the following problem arises naturally: How to optimally remove a subset $\del\subset E$ from the network subject to a cardinality constraint $k$ specifying the maximum number of edges that can be removed so that $\calI_\lea$ is minimized while preserving the connectivity of the graph. Mathematically, the information centrality minimization problem can be stated as follows.

\begin{problem}\label{pro:inforMini}
(\uline{Infor}mation \uline{Cen}trality \uline{Min}imization, \InforCenMin) Given a connected undirected network $\calG=(V,E)$, a predefined target node $\lea$, and an integer $k$, we aim to find the edge set $\del\subset E$ with $\sizeof{\del}=k$, so that the information centrality $\calI_v(P)$ is minimized while simultaneously retaining the connectivity of the network. This set optimization problem can be formulated as:

\begin{equation}\label{eq:pro1}
\del^{*} = \argmin_{P \subset E, |P|=k, \calG\setminus P \textrm{ is connected}} \calI_\lea(P).
\end{equation}
\end{problem}

Simply removing edges incident to the target node seems effective to reduce its information centrality. However, removing nonadjacent edges can provide greater benefits in some cases. For example, consider the Dolphin~\cite{RoAh15} network with 62 nodes, 159 edges, and targeting the green node in Fig.~\ref{fig:toy1}. We find that none of the top-10 edges (colored in red) leading to the largest reduction in the target node's information centrality are adjacent to it. This suggests that the scope of edge removal should not be simplified.
\begin{figure}[htbp]
    \centering
    \includegraphics[width=0.7\linewidth]{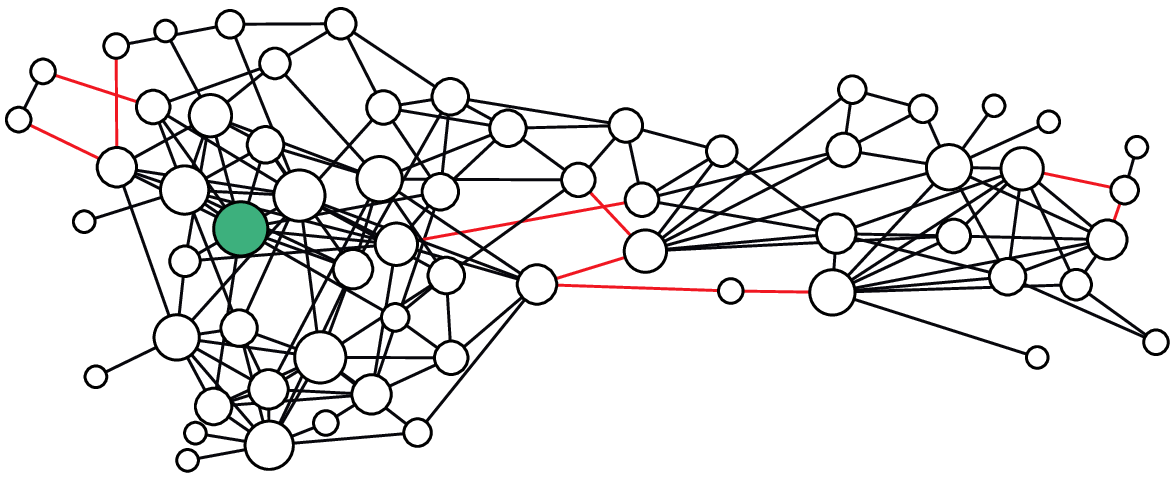}
    \caption{The Dolphin network with the green target node.}
     \label{fig:toy1}
\end{figure}

\section{Our Contribution}
\label{sec:contr}

We investigate the \InforCenMin problem, both theoretically and experimentally, where the goal is to minimize the information centrality of a target node given the possibility to remove $k$ edges while keeping the underlying graph connected. This is a very natural and intuitive formulation of the problem of moderating critical nodes in a network, which in particular takes the very important parameter of information centrality into account.

We first prove that the problem is NP-complete, building on a reduction from the Hamiltonian cycle problem~\cite{karp2010reducibility}. Furthermore, we show that while the objective function of our problem is monotonically decreasing, it does not enjoy supermodularity property, by providing an explicit counterexample. As a result, the traditional greedy approaches would not facilitate us with any theoretical guarantees. However, since such greedy approaches have proven to be a useful base for devising effective and efficient algorithms in the past~\cite{guo2019targeted,bian2017guarantees} (even when super-modularity does not hold) we rely on them as a starting point too.

We first propose a simple algorithm where edges are deleted following the classic greedy approach. This algorithm, called \ExactSM, runs in $O(n^3)$, which makes it impractical for very large networks. As the first step towards a faster algorithm, we use the random walk-based approximate Schur complement method~\cite{DuPePeRa17,DuGaGoPe19} to present a faster algorithm. In this algorithm, called \ApproxiSC, after each edge removal, the new value of the resistance distance and the connectivity status can be determined very quickly by carefully modifying the set of random walks.
To further speed up the computation, as the next step we also leverage the sum estimation method~\cite{feigesum2006,sum2022}, which allows us to provide the algorithm \FastSC that runs in nearly linear time in the number of edges.
The sum estimation method permits us to approximate the resistance distance of the target node by sampling limited pairs of effective resistances. Our theoretical analyses confirm that the combination of the above techniques is well-suited to our problem. Specifically, the absolute error between the approximate resistance distance upon the removal of any edge and the corresponding exact value is at most $\alpha n$, for a small error parameter $\alpha$.

Besides our theoretical analyses, we conduct a comprehensive set of experiments on real-world networks from Network Repository~\cite{RoAh15} and SNAP~\cite{LeSo16} and on synthetic graph models, namely Barabási–Albert (BA)~\cite{albert2002statistical} and Watts–Strogatz (WS)~\cite{WaSt98}. We compare our algorithms against several other algorithms and observe that our algorithms significantly outperform others while producing results very close to optimal solution. Furthermore, our linear time \FastSC algorithm enjoys an extremely fast run time in practice as well. In particular, it completes the task for networks with more than one million nodes in a few hours on a standard 32G-Linux box. Therefore, our algorithms not only allow a rigorous theoretical analysis but also outperform other algorithms in practice, in both aspects of effectiveness and efficiency.

\section{RELATED WORK}\label{sec:related}

In this section, we review the literature related to ours, including minimizing spread of misinformation in social networks, privacy protection of networks, edge removal strategies, and edge centrality measures. Specifically, prior work provided in Sections~\ref{sec:related01} and~\ref{sec:related02} would let us place our contribution in a bigger picture and draw the connection to some adjacent topics, while the results in Sections~\ref{sec:related03} and~\ref{sec:related04} are more closely related.

\subsection{Minimizing Spread of Misinformation}\label{sec:related01}

Consider the setup where a rumor starts spreading in a social network from a known set of nodes and following a predefined spreading dynamics such as the Independent Cascade model~\cite{kempe2003maximizing}. Then, the goal is to contain the spread by blocking $k$ edges~\cite{khalil2014scalable, yao2015minimizing}. A greedy algorithm is developed in~\cite{yao2015minimizing} where in each iteration an edge with maximum containment ability is selected, and some heuristics are proposed in~\cite{khalil2014scalable}. In~\cite{kuhlman2013blocking}, considering the cost for blocking each edge, some budget constraints are defined for critical edge identification. Some heuristic algorithms are then proposed to solve the problem. Applying the maximum influence arborescence method~\cite{chen2010scalable}, an approximation method is proposed in~\cite{yan2019rumor}. These algorithms have several shortcomings. Firstly, they are usually tailored for a particular rumor spreading model such as the Independent Cascade model rather using a more generic notion of information centrality. Secondly, they disregard the important constraint of connectivity, and they might produce disconnected networks. Furthermore, they often fail to cover large networks due to their time complexity.

\subsection{Network Privacy Protection}\label{sec:related02}

One area which is closely related to our work is privacy protection in networks. Most works in this area focus on protecting the privacy of users by structural anonymization. The goal is to modify graph structure to anonymize the underlying network, using various methodologies such as $k$-Anonymity, and differential privacy-based approaches~\cite{karwa2011private,Milanidata2023}. One important objective here is to anonymize the key nodes in a network by reducing their centrality. Previous studies have investigated the problem of removing edges to decrease the centrality of some target node with regard to degree centrality~\cite{waniekHidingIndividualsCommunities2018}, closeness centrality~\cite{ji2019greedily}, and the likelihood that the target node appears in an absorbing random walk~\cite{lanetprotect2021}. However, the notion of information centrality has not been analyzed in this framework.

\subsection{Edge Removal Strategies} \label{sec:related03}

Admittedly, as a practical approach of graph edit, edge removal operation has been extensively used for different application purposes, such as controlling disease spreading~\cite{EnMoBr12,MaKa09}, minimizing the number of spanning trees~\cite{Ra98}, and optimizing eigenvalues of related matrices~\cite{ChToPrElFaFa16,ZhZhCh21}. In social networks, removing edges can correspond to unfriending, not exposing the posts/comments, or maintaining social distance. In computer networks, removing edges is similar to cutting a fiber or bringing down a physical/virtual link temporarily. Many studies on edge removal require the final graph to remain connected. For instance, in~\cite{gusrialdi2018distributed}, the authors have studied the problem of decreasing the greatest eigenvalue of the adjacency matrix by link removal while preserving network connectivity. The authors of~\cite{schoone1987diameter} have investigated the problem of expanding the network diameter by eliminating edges such that the resulting graph remains connected. This is because connectivity is usually essential for the network to preserve its main functionality.

\subsection{Edge Centrality}\label{sec:related04} 

The drop in the information centrality of a node caused by deleting an edge can be used as a measure of its importance. There have been many metrics proposed in the literature to assess the importance of a single edge or a group of edges. Individual edge centrality measures comprise, among others, edge betweenness~\cite{BrPi2007}, spanning edge centrality~\cite{MaCh2015}, and biharmonic distance related edge centrality~\cite{YiSh2018}. Additionally, the importance of an edge can be determined based on the centrality of its endpoints, such as the sum or product of the degree, closeness, and betweenness centrality of the endpoints~\cite{bellingeri2020comparative}. Group edge centrality measures are typically designed to quantify the effect of deleting these edges on specific objective functions, such as 
the inverse geodesic length~\cite{GaNa19}, the total pairwise connectivity~\cite{DiXuThPaZn12}, and the forest index~\cite{ZhBaZh23}. These existing edge centrality measurements are tailored for distinct use cases. Our proposed metric is defined based on the reduction in the information centrality of a target node.

\section{Complexity Challenges}
\label{sec:complexity}

In this section, we study the computational complexity of the \InforCenMin problem. We first consider the decision version of the problem and prove that it is NP-complete in Theorem~\ref{the:NP-com}.
\begin{problem}(\uline{Infor}mation \uline{Cen}trality \uline{Min}imization, \uline{D}ecision Version, \InforCenMinD)
Given a connected undirected graph $\calG=(V, E)$ with $n$ nodes, $m$ edges, and node $\lea$ being the target node, an integer $k \in \mathbb{N}^{+}$, a real number $x \in \mathbb{R}^{+}$, decide whether or not there is a set $\del$ of $k$ edges to be removed from $\calG$ such that $\mathcal{I}_\lea$ is at most $x$ in the connected subgraph $\calG\setminus\del$?
\end{problem}

\begin{theorem}\label{the:NP-com}
    The \InforCenMinD problem is NP-complete.
\end{theorem}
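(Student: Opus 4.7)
Membership in NP is immediate: given a candidate $\del\subseteq E$ of size $k$, a BFS checks connectivity of $\calG\setminus\del$ in $O(n+m)$ time, and the Moore--Penrose pseudoinverse $\LL^\dag$ of the remaining Laplacian can be computed in polynomial time and plugged into the closed form~\eqref{ItoR} to verify whether $\calI_\lea\le x$.

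For NP-hardness I would reduce from the fixed-start-vertex variant of the Hamiltonian path problem, itself NP-complete via a standard vertex-duplication reduction from Hamiltonian cycle. Given such an instance consisting of a graph $H=(V,E)$ with $|V|=n$, $|E|=m$, and a prescribed start vertex $s\in V$, take the \InforCenMinD instance to be $H$ itself with target $\lea:=s$, budget $k:=m-(n-1)$, and threshold $x:=2/(n-1)$. Any feasible $\calG\setminus\del$ then contains exactly $n-1$ edges on $n$ vertices and is connected, hence is a spanning tree $T$ of $H$, on which the effective resistance between any two nodes equals their tree distance. The reduction rests on the structural claim
\begin{equation*}
\mathcal{R}_\lea(T)=\sum_{u\ne\lea}d_T(\lea,u)\le 1+2+\cdots+(n-1)=\frac{n(n-1)}{2},
\end{equation*}
with equality if and only if $T$ is a Hamiltonian path of $H$ having $\lea$ as an endpoint. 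Combined with~\eqref{ItoR}, this gives $\calI_\lea(T)\ge 2/(n-1)=x$, with equality precisely in the Hamiltonian-path case, so a feasible solution attaining $\calI_\lea\le x$ exists if and only if $H$ admits a Hamiltonian path starting at $\lea$.

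The main technical obstacle will be the equality-case analysis of the tree-distance bound. The inequality itself follows from the level-counting identity $\sum_{u\ne\lea}d_T(\lea,u)=\sum_{d\ge 1}|\{u:d_T(\lea,u)\ge d\}|$ together with the observation that at least $d$ vertices of $T$ lie at distance less than $d$ from $\lea$, since one vertex appears at each of the distances $0,1,\dots,d-1$ along any $\lea$-rooted path of length at least $d$. Ruling out near-path configurations such as caterpillars or subdivided stars as alternative maximizers, however, requires a careful treatment of branching vertices; I would argue this via a swap argument, showing that re-routing any leaf into the tail of the longest $\lea$-rooted path strictly increases the distance sum unless $T$ is already a Hamiltonian path rooted at $\lea$. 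Since $x=2/(n-1)$ clearly has polynomial bit length, this completes the polynomial-time reduction.
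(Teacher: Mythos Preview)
Your proposal is correct and follows essentially the same route as the paper: both reduce from a Hamiltonian problem by taking $k=m-(n-1)$ so that any feasible $\calG\setminus\del$ is a spanning tree, and then use the extremal fact that $\sum_{u\ne\lea}d_T(\lea,u)\le n(n-1)/2$ with equality iff $T$ is a path ending at $\lea$; your choice of fixed-start Hamiltonian path as the source problem is in fact cleaner than the paper's stated reduction from Hamiltonian cycle. One remark: the swap argument you anticipate for the equality case is unnecessary, since your own level-counting already delivers it---equality in $|\{u:d_T(\lea,u)\ge d\}|\le n-d$ for every $1\le d\le n-1$ forces exactly one vertex at each depth $0,1,\dots,n-1$, which is precisely a Hamiltonian path rooted at $\lea$.
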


\begin{proof}
We demonstrate a polynomial time reduction from the Hamiltonian cycle problem, which is NP-complete~\cite{karp2010reducibility}. We presume that edge deletion reserves the network's connectivity. Given that we can guess the $k$ edges to be removed and compute $\mathcal{I}_\lea$ in polynomial time, it is apparent that the problem is in NP. The smallest $\mathcal{I}_\lea$ that could possibly be assigned to a node in a connected undirected network with $n$ nodes is $\mathcal{I}_\lea^{\rm{min}}=n/\sum_{i=1}^{n-1}i=\frac{2}{(n-1)(n-2)}$ (e.g., node $\lea$ in Fig.~\ref{fig:toy-nphard}).

\begin{figure}[h]
    \centering
    \includegraphics[width=0.7\linewidth]{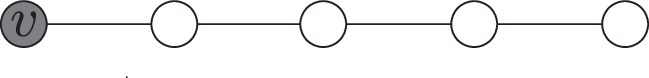}
    \caption{A $5$-node path graph targeting at node $\lea$.\label{fig:toy-nphard}}
\end{figure}

Graph $\calG$ contains a Hamiltonian cycle if and only if $\calG$ has a connected subgraph $\calG^{\prime}$ with $n$ nodes, $n-1$ edges and the information centrality of its end node being $\mathcal{I}_\lea^{\rm{min}}$. So by choosing $k=m-n+1$, and $x=\mathcal{I}_\lea^{\rm{min}}$, we have a reduction from the Hamiltonian cycle problem to the \InforCenMinD problem, proving its NP-completeness.
\end{proof}

A very common technique to tackle NP-hard problems, such as \InforCenMin, is to prove that the objective function enjoys both monotonicity and super-modularity, which consequently would provide us with a Hill Climbing algorithm with a constant approximation guarantee~\cite{guo2019targeted,bian2017guarantees}. While as stated in Lemma~\ref{lemm:mont}, our objective function is monotone, it does not possess super-modularity property, proven in Lemma~\ref{lemm:modular}.

\begin{lemma} (Monotonicity)
\label{lemm:mont}
For two subsets $S$ and $H$ of edges satisfying $S \subset H \subset E$, and $\calG\setminus H$ is a connected graph, we have
\begin{equation*}
    \mathcal{I}_{\lea}(H)<\mathcal{I}_{\lea}(S).
\end{equation*}
\end{lemma}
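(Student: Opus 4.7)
The plan is to convert the claim about information centrality into one about the resistance sum via (\ref{ItoR}), reduce it to a single-edge removal, and then invoke the Sherman--Morrison rank-one update for the Laplacian pseudoinverse to upgrade Rayleigh's monotonicity from weak to strict. Concretely, (\ref{ItoR}) gives $\calI_\lea(P) = n/\mathcal{R}_\lea(P)$, so the inequality $\calI_\lea(H) < \calI_\lea(S)$ is equivalent to $\mathcal{R}_\lea(H) > \mathcal{R}_\lea(S)$, and this is what I will target.

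First I would enumerate $H \setminus S$ as $e_1, \ldots, e_t$ (with $t \geq 1$ since $S \subsetneq H$) and set $S_i := S \cup \{e_1, \ldots, e_i\}$. Each intermediate graph $\calG \setminus S_i$ is a supergraph of the connected graph $\calG \setminus H$, so it is itself connected and $\mathcal{R}_\lea(S_i)$ is well-defined. Rayleigh's monotonicity (invoked in Section~\ref{sec:profor}) yields $\mathcal{R}_\lea(S_0) \leq \mathcal{R}_\lea(S_1) \leq \cdots \leq \mathcal{R}_\lea(S_t)$, so it suffices to prove strict inequality at the first step, $\mathcal{R}_\lea(S_1) > \mathcal{R}_\lea(S_0)$.

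For this single-edge step, let $\LL$ be the Laplacian of $G_0 := \calG \setminus S$ and $\LL_e$ the Laplacian of $G_0$ with $e_1 = (x,y)$ removed, so $\LL_e = \LL - \bb_{e_1}\bb_{e_1}^\top$. Both are connected-graph Laplacians sharing the kernel $\mathrm{span}(\one)$, so applying Sherman--Morrison on the invariant subspace $\one^\perp$ gives
\begin{equation*}
\LL_e^\dag = \LL^\dag + \frac{\LL^\dag \bb_{e_1}\bb_{e_1}^\top \LL^\dag}{1 - \bb_{e_1}^\top \LL^\dag \bb_{e_1}}.
\end{equation*}
Substituting this into (\ref{eqR1}) and using $\bb_{e_1}^\top \LL^\dag \bb_{e_1} = \mathcal{R}^{G_0}_{xy}$ yields, after collecting diagonal and trace terms,
\begin{equation*}
\mathcal{R}_\lea(S_1) - \mathcal{R}_\lea(S_0) = \frac{n\,(\bb_{e_1}^\top \LL^\dag \ee_\lea)^2 + \norm{\LL^\dag \bb_{e_1}}^2}{1 - \mathcal{R}^{G_0}_{xy}}.
\end{equation*}

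The main obstacle is verifying that this quantity is strictly positive. For the denominator, since $G_0 \setminus e_1$ is connected (as a supergraph of $\calG \setminus H$) there exists an $x$--$y$ path in $G_0$ avoiding $e_1$, and the parallel-combination rule forces $\mathcal{R}^{G_0}_{xy} < 1$. For the numerator, $\ker(\LL^\dag) = \mathrm{span}(\one)$, while $\bb_{e_1} = \ee_x - \ee_y$ is a nonzero zero-sum vector and hence does not lie in that kernel; therefore $\norm{\LL^\dag \bb_{e_1}}^2 > 0$. Thus $\mathcal{R}_\lea(S_1) > \mathcal{R}_\lea(S_0)$, and combined with the weak monotonicity of the remaining removals this gives $\mathcal{R}_\lea(H) > \mathcal{R}_\lea(S)$, equivalently $\calI_\lea(H) < \calI_\lea(S)$, as claimed.
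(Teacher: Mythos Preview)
Your argument is correct. The paper itself does not supply a standalone proof of this lemma; it simply points back to the remark in Section~\ref{sec:profor} that Rayleigh's monotonicity law makes $\mathcal{R}_\lea(P)$ monotonically increasing, and hence $\calI_\lea(P)$ monotonically decreasing, under edge removal. In that sense your proof is strictly more detailed than what appears in the paper.

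The main substantive difference is that you take the trouble to establish \emph{strict} inequality. The paper's one-line appeal to Rayleigh only delivers $\mathcal{R}_\lea(S)\le \mathcal{R}_\lea(H)$ at face value, and the strict version requires exactly the kind of argument you give: a single-edge step plus the Sherman--Morrison identity (which is Equation~\eqref{eq:smupdate} in the paper) to see that the increment is a positive numerator over a positive denominator. Your reasoning that $1-\mathcal{R}^{G_0}_{xy}>0$ because $G_0\setminus e_1$ remains connected, and that $\norm{\LL^\dag\bb_{e_1}}^2>0$ because $\bb_{e_1}\in\one^\perp\setminus\{0\}$, is precisely what is needed and is cleanly done. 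So: same underlying idea (Rayleigh), but you actually close the gap between weak and strict monotonicity that the paper leaves implicit.
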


\begin{lemma} (Non-supermodularity)
\label{lemm:modular}
    $\calI_{\lea}(\cdot)$ is not supermodular.
\end{lemma}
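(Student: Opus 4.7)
The plan is to refute supermodularity by exhibiting an explicit small graph that serves as a counterexample. Concretely, I need a connected graph $\calG=(V,E)$, a target node $\lea\in V$, nested edge sets $S \subsetneq H \subsetneq E$, and a candidate edge $a\in E\setminus H$ with $\calG\setminus(H\cup\{a\})$ still connected, such that
\begin{equation*}
\calI_{\lea}(S \cup \{a\}) - \calI_{\lea}(S) \;>\; \calI_{\lea}(H \cup \{a\}) - \calI_{\lea}(H).
\end{equation*}
By Lemma~\ref{lemm:mont} both sides are strictly negative, so this inequality says that the absolute marginal drop in information centrality caused by deleting $a$ is larger once the extra edges $H\setminus S$ have already been removed. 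The intuition is standard: when several paths parallel to $a$ are still present, the resistance distance $\mathcal{R}_{\lea}$ is insensitive to deleting $a$ alone, but after those alternatives have been stripped away via $H$, the very same edge $a$ becomes critical and its removal induces a much larger spike in $\mathcal{R}_{\lea}$.

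Guided by this intuition, I will pick a small, moderately dense graph on roughly four to six vertices in which the target $\lea$ lies on several short cycles---natural candidates include $K_4$, a $4$-cycle with a chord, or a small wheel-like graph. I will then choose $H\setminus S$ to consist of edges lying on paths parallel to $a$, so that the amplification effect is realized, and verify that in each of the four subgraphs $\calG\setminus S$, $\calG\setminus(S\cup\{a\})$, $\calG\setminus H$, $\calG\setminus(H\cup\{a\})$ connectivity is preserved and $\calI_{\lea}$ is well-defined.

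The remaining step is a direct numerical computation. Using the closed form $\calI_{\lea}=n/\left(n\LL^\dag_{vv}+\trace{\LL^\dag}\right)$ from Eq.~\eqref{ItoR}, I will evaluate $\calI_{\lea}$ in each of the four cases, either by pseudoinverting the Laplacian of the corresponding subgraph or, equivalently, by enumerating pairwise effective resistances via Kirchhoff-type formulas. Substituting the four resulting rational values into the two marginal differences will then yield the strict inequality above, directly contradicting the definition of supermodularity.

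The main obstacle is calibrating the counterexample: the graph must be dense enough to display the amplification effect described above yet small enough for the Laplacian arithmetic to remain transparent and easily checkable. I expect the smallest workable instance to involve at most six vertices and to exploit a local triangle or $4$-cycle containing the target, after which the rest of the argument reduces to routine linear algebra on small matrices.
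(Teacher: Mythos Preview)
Your proposal is correct and follows essentially the same approach as the paper: the paper also refutes supermodularity by exhibiting an explicit small counterexample, namely a $5$-node graph with $S=\emptyset$, $H=\{e_1\}$, and $a=e_2$, for which the computed values give $\calI_{\lea}(S)-\calI_{\lea}(S\cup\{e_2\})=0.4<0.6=\calI_{\lea}(H)-\calI_{\lea}(H\cup\{e_2\})$. Your intuition about parallel paths being stripped away so that the remaining edge becomes critical is exactly the mechanism at work in the paper's example.
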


\begin{proof}
To exemplify the non-supermodularity of the objective function~\eqref{eq:pro1}, consider the network in Fig.~\ref{fig:nonsup} (a), a $5$-node graph with node $1$ being the target node and $e_1$ and $e_2$ being edges to delete. We define two edge sets, $S=\emptyset$ and $H=\{e_1\}$. Then, we have
$\calI_{\lea}(S)=1.8$, $\calI_{\lea}({S\cup{\{e_2\}})}=1.4$, 
$\calI_{\lea}({H})=1.3$, and $\calI_{\lea}({H\cup{\{e_2\}})}=0.7$.
Thus, we have
$$\calI_{\lea}(S)-\calI_{\lea}({S\cup{\{e_2\}})}=0.4<0.6=\calI_{\lea}({H})-\calI_{\lea}({H\cup{\{e_2\}})}.$$ This result clearly contradicts the definition of supermodularity. Consequently, the set function of the $\InforCenMin$ problem is not supermodular.
\end{proof}

\begin{figure}[htbp]
    \centering
         \includegraphics[width=0.7\columnwidth]{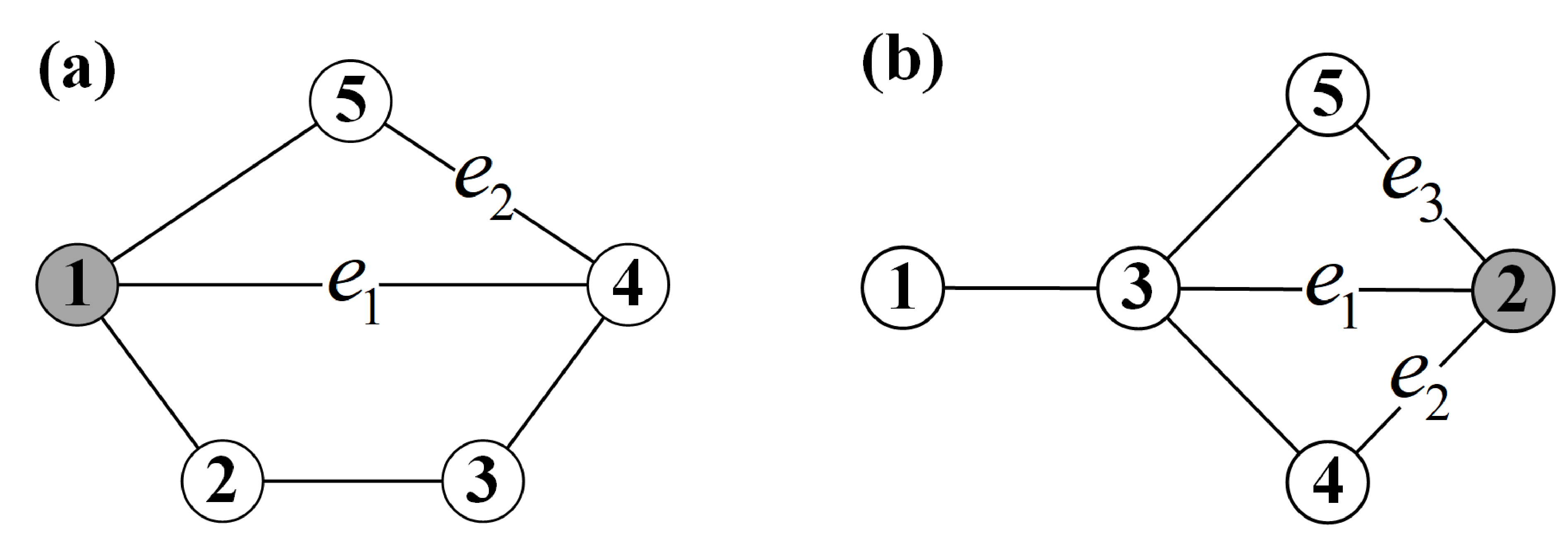}
     \caption{Two $5$-node toy networks.\label{fig:nonsup}}
     \label{fig:toy}
\end{figure}

\section{Deterministic Greedy Algorithm}\label{sec:detergreedy}

The \InforCenMin problem is inherently combinatorial. Its optimal solution can be computed using the following naïve brute-force approach. For each set $\del$ of the $\tbinom{m}{k}$ possible subsets of edges, determine the connectivity of $\calG\setminus \del$, and calculate $\mathcal{I}_\lea(\del)$ in the reduced graph by inverting the Laplacian matrix. Finally, output the subset $\del^{*}$ of $k$ edges whose deletion leads to greatest decrease in $\calI_\lea$ while keeping the connectivity of the graph. Since inverting the Laplacian matrix could take $\Omega(n^3)$ time and there are $\tbinom{m}{k}$ possible subsets, the algorithm's time complexity is in $\Omega\left({m \choose k}n^3\right)$. Thus, albeit its simplicity, this method is computationally unaffordable even for small networks due to its exponential time complexity in $k$.

To tackle this issue, one may consider the heuristic approach of picking the top-$k$ edges with the greatest individual effect on reducing the information centrality of the target node. However, due to the interdependence of edges, the cumulative effect of removing a group of edges is often not equivalent to the sum of individual edge effects. For example, see the network in Fig.~\ref{fig:nonsup} (b), where node 2 is the target node, and edges $e_2$ and $e_3$ are the top-2 edges whose removal has the greatest individual effect on the information centrality of node 2. Surprisingly, removing these top-$2$ edges reduces the information centrality of node 2 to 0.71 while removing edges $e_1$ and $e_2$ would reduce it to $0.56$.

An alternative heuristic is the standard greedy algorithm which starts with an empty set $\del$. Then, in each iteration $i\in \{1,2,\ldots,k\}$, it adds the edge that results in the largest decrease in information centrality of the target node, while preserving connectivity. However, this greedy approach is computationally expensive because of two obstacles present during each iteration: 1) determining whether removal of a certain edge would disconnect the graph could take $\Omega(n)$ time; 2) the computation of the new information centrality through inversion of the matrix might take $\Omega(n^3)$ time. As a result, the total running time could amount to $\Omega(kmn^3)$, which is unaffordable for large networks.

To overcome the first obstacle, some researches focusing on the dynamic connectivity problem resolve connectivity queries in $O(\log n / \log \log \log n)$ time~\cite{wulff2013faster,kapron2013dynamic}. However, these works remain in the realm of theory and can hardly be applied in practice. Moreover, the computational bottleneck caused by 2) is larger; thus, we first focus on reducing the time complexity of updating information centrality.

Let $\mathcal{I}_\lea^\Delta(e) = \mathcal{I}_\lea(\{e\}) - \mathcal{I}_\lea(\emptyset)$ denote the margin gain of the information centrality by removing edge $e$. We provide an efficient method for computing $\mathcal{I}_\lea^\Delta(e)$ in the following lemma.

\begin{lemma}\label{lem:SM}
Let $\calG=(V,E)$ be a connected graph with Laplacian matrix $\LL$. Let $e=(x,y) \in E$ be a candidate edge satisfying that $\calG\setminus\{e\}$ is connected. Then,
\begin{align*}\label{eq:SM}
  \mathcal{I}_\lea^\Delta(e) =\frac{-(nb+n^2c)}{\left(na\LL^{\dag}_{\lea\lea}+nc+a\trace{\LL^{\dag}}+b\right)\left(n\LL^{\dag}_{\lea\lea}+\trace{\LL^{\dag}}\right)},
\end{align*}
where $a= 1-\bb_e^{\top}\LL^{\dag}\bb_e$, $b=\bb_{e}^{\top}\LL^{2\dag}\bb_e$ and $c=\left(\LL^{\dag}\bb_e\right)_{\lea}^{2}$.
\end{lemma}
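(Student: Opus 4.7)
The plan is to apply the Sherman--Morrison identity to the rank-one perturbation of the Laplacian caused by deleting $e$, then plug the resulting updates into the formula $\mathcal{I}_\lea = n/(n\LL^\dag_{\lea\lea} + \trace{\LL^\dag})$ and simplify.

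First I would observe that removing the single edge $e = (x,y)$ modifies the Laplacian by a rank-one subtraction, $\LL' \defeq \LL - \bb_e \bb_e^\top$. Since $\LL$ is singular, I cannot apply Sherman--Morrison to $\LL$ directly, so I would instead work with the invertible shift $\LL + \frac{1}{n}\JJ$, using the identity $\LL^\dag = (\LL + \frac{1}{n}\JJ)^{-1} - \frac{1}{n}\JJ$ provided in the preliminaries. The crucial structural fact that makes this clean is that $\bb_e = \ee_x - \ee_y$ satisfies $\JJ \bb_e = \mathbf{0}$, so that $(\LL + \frac{1}{n}\JJ)^{-1}\bb_e = \LL^\dag \bb_e$ and $\bb_e^\top (\LL + \frac{1}{n}\JJ)^{-1}\bb_e = \bb_e^\top \LL^\dag \bb_e$.

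Then Sherman--Morrison, together with the connectivity assumption $1 - \bb_e^\top \LL^\dag \bb_e = a \neq 0$ (which is guaranteed since $\calG \setminus \{e\}$ is still connected, so $e$ is not a bridge and the effective resistance $\bb_e^\top \LL^\dag \bb_e < 1$), yields
\begin{equation*}
(\LL')^\dag \;=\; \LL^\dag + \tfrac{1}{a}\,\LL^\dag \bb_e \bb_e^\top \LL^\dag .
\end{equation*}
From this I would read off the two scalar quantities that enter the information centrality: the diagonal entry $(\LL')^\dag_{\lea\lea} = \LL^\dag_{\lea\lea} + c/a$, where $c = (\LL^\dag \bb_e)_\lea^{\,2}$, and the trace $\trace{(\LL')^\dag} = \trace{\LL^\dag} + b/a$, where $b = \bb_e^\top \LL^{2\dag} \bb_e$ comes from the cyclic trace identity applied to the rank-one term.

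Finally I would substitute into $\mathcal{I}_\lea(\{e\}) = n/(n(\LL')^\dag_{\lea\lea} + \trace{(\LL')^\dag})$, clear the denominator by a factor of $a$ to obtain $\mathcal{I}_\lea(\{e\}) = na/(na\LL^\dag_{\lea\lea} + a\trace{\LL^\dag} + nc + b)$, and subtract $\mathcal{I}_\lea(\emptyset) = n/(n\LL^\dag_{\lea\lea} + \trace{\LL^\dag})$. A common-denominator calculation collapses the numerator (the $naD$ and $-naD$ terms cancel, where $D = n\LL^\dag_{\lea\lea} + \trace{\LL^\dag}$) and leaves precisely $-(nb + n^2 c)$ on top, which gives the stated expression. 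The only genuinely delicate point is ensuring the algebraic simplification is done cleanly without losing the cancellation; everything else is a direct application of Sherman--Morrison combined with the annihilation $\JJ\bb_e = 0$.
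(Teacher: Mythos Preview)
Your proposal is correct and follows essentially the same route as the paper: apply the Sherman--Morrison identity to the rank-one perturbation $\LL-\bb_e\bb_e^\top$, extract the updated diagonal entry and trace, and simplify the difference of the two information-centrality fractions. The paper's proof is in fact terser---it simply invokes the pseudoinverse Sherman--Morrison formula from~\cite{Me73} without spelling out the $\JJ\bb_e=0$ argument you give to justify passing through the invertible shift $\LL+\tfrac{1}{n}\JJ$---so your version supplies a bit more detail at exactly the one subtle point.
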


\begin{proof}
By definition, we have 
\begin{align}
    \notag\mathcal{I}_\lea(\{e\})=\frac{n}{n(\LL-\bb_e\bb_e^\top)^\dag_{\lea\lea}+\trace{(\LL-\bb_e\bb_e^\top)^\dag}}.
\end{align} By Sherman-Morrison formula~\cite{Me73}, we have
\begin{equation}\label{eq:smupdate}
    (\LL-\bb_e\bb_e^\top)^\dag = \LL^\dag+\frac{\LL^\dag\bb_e\bb_e^\top \LL^{\dag}}{1-\bb_e^\top \LL^\dag \bb_e}.
\end{equation}
Note that $\bb_e^\top \LL^\dag \bb_e$ equals the effective resistance between nodes $x$ and $y$, whose value is 1 whenever removing this edge partitions the graph into two components and is less than 1 otherwise.
Thus, the gain in information centrality can be written as
\small 
\begin{align*}
\mathcal{I}_{\lea}^{\Delta(e)} =& \frac{n}{n\big(\LL-\bb_{e}\bb_{e}^{\top}\big)_{\lea\lea}^{\dag}+\trace{(\LL-\bb_{e}\bb_{e}^{\top})^{\dag}}} \\
     &-\frac{n}{n\LL^\dag_{\lea\lea}+\trace{\LL^\dag}} \\
     =&\frac{-(nb+n^2c)}{\left(na\LL^{\dag}_{\lea\lea}+nc+a\trace{\LL^{\dag}}+b\right)\left(n\LL^{\dag}_{\lea\lea}+\trace{\LL^{\dag}}\right)},
\end{align*}
 completing the proof.
\end{proof}

According to Lemma~\ref{lem:SM}, if $\LL^\dag$ is known, we can efficiently compute the marginal gain of the information centrality for one edge by rank-1 update in $O(n)$ time. Then, we propose a deterministic greedy algorithm $\ExactSM(\calG, \lea, k)$. As outlined in Algorithm~\ref{alg:esm}, the first step of this algorithm is to set the result edge set $\del$ to empty and compute $\LL^{\dag}$ in $O(n^3)$ time (Line 1). Then we add $k$ edges to $\del$ iteratively (Lines 2-11). In each iteration, for each candidate edge $e\in E$, we determine the connectivity of the graph $\calG\setminus\{e\}$ in $O(n)$ time (Line 4) and compute the marginal gain of the information centrality in $O(n)$ time (Line 7). After obtaining $\mathcal{I}^\Delta_{\lea}(\cdot)$ for each candidate edge, we select the edge that leads to the smallest marginal gain (Line 8), update the solution (Line 9) and graph (Line 10), and update $\LL^\dag$ according to Equation~\eqref{eq:smupdate} in $O(n^2)$ time. In summary, the total running time of Algorithm~\ref{alg:esm} is $O(n^3+kmn +kn^2)$.

\normalem
\begin{algorithm}
  \caption{$\ExactSM(\calG, \lea , k)$}
  \label{alg:esm}
  \Input{
    A connected graph $\calG=(V,E)$; a target node $\lea \in V$; an integer $k \leq m$
  }
  \Output{
    A subset of $\del \subset E$ with $|\del| = k$
  }
  Set $\del \gets \emptyset$; compute $\LL^\dag$ \;
  \For{$i = 1$ to $k$}{
      \For{$e\in E$}
      {\If {$\calG\setminus\{e\}$ is not connected} {Set $\mathcal{I}^\Delta_{\lea}(e)=0$\;}
      \Else{Compute $\mathcal{I}_{\lea}^\Delta(e)$ by Lemma~\ref{lem:SM}\;}
      }
    Select $e_i$ s.t. $e_i \gets \mathrm{arg\,min}_{e \in E} \mathcal{I}^\Delta_{\lea}(e)$\;
    Update solution $\del \gets \del \cup \{e_i\}$ \;
    Update the graph $\calG \gets (V,E \backslash \{ e_i \})$ \;
    Update $\LL^\dag \gets \LL^\dag+ \frac{\LL^\dag \bb_{e_i} \bb_{e_i}^\top \LL^\dag}{1 -\bb_{e_i}^\top \LL^\dag \bb_{e_i}}$
  }
    \Return $\del$ \;
\end{algorithm}

\section{FAST RANDOMIZED GREEDY ALGORITHM}\label{sec:fastgreedy}

The deterministic greedy algorithm, while faster than the brute-force approach, is not feasible for large networks due to the high computational cost of determining the information centrality marginal gain and graph connectivity. On the positive side, information centrality and the resistance distance are shown to be correlated in Equation~\eqref{ItoR}. This permits us to leverage random walk-based approximate Schur complement method~\cite{DuPePeRa17,DuGaGoPe19} to present a faster algorithm, called \ApproxiSC in Section~\ref{sec:approxisc}. To speed up the computation even further, we then utilize the sum estimation method~\cite{feigesum2006,sum2022}, which allows us to present the algorithm \FastSC in Section~\ref{sec:fastsc} which runs in nearly linear time in the number of edges. 

\subsection{A Simple Sampling Algorithm}
\label{sec:approxisc}
To efficiently calculate and update the information centrality, the main computational bottleneck is the fast calculation of effective resistance Equation~\eqref{ItoR}. Several approximation methods have been proposed to estimate pairwise effective resistance in sublinear time~\cite{DuGaGoPe19,PeLoYoGo21}. However, a na\"ive approach would need to approximate $n$ distinct pairwise effective resistance and then update them for $m$ potential candidate edge. This seems to be computationally expensive. To address this issue, we approach the problem from a different angle, which facilitates us with a much faster mechanism to approximate the effective resistances.

As mentioned in Section~\ref{sec:information}, the Schur complement can be considered as a vertex sparsifier preserving pairwise effective resistance. This motivates us that if we can efficiently compute and update the Schur complement, the aforementioned challenges could be resolved. However, calculating the Schur complement directly is time-consuming. Building upon the ideas of sparsifying random walk polynomials~\cite{ChChLiPeTe15b} and Schur complement~\cite{DuPePeRa17,KyLePeSaSp16}, we approximate it using a collection of random walks that can be maintained upon edge removal with a reasonable time complexity. The following lemma, borrowed from~\cite{DuGaGoPe19,ChChLiPeTe15b, DuPePeRa17,KyLePeSaSp16}, asserts that these walks provide an accurate estimate of the Schur complement.

\begin{lemma}\cite{DuGaGoPe19} \label{lem:appsc}
Let $\calG=(V,E)$ be an undirected unweighted graph with a subset of nodes $T\subset V$. Assume $\epsilon \in (0,1)$, and let $\rho=O\left((\log n)\epsilon^{-2} \right)$ be some sampling concentration parameter. Suppose that $\calH$ is an initially empty graph. For every edge $e=(i,j)\in E$, repeat the following procedure $\rho$ times:
\begin{enumerate}
    \item Simulate a random walk $w_1$ starting from node $i$ until it first hits $T$ at some node $t_1$.
    \item Simulate a random walk $w_2$ starting from node $j$ until it first hits $T$ at some node $t_2$.
    \item Combine these two walks (including $e$) as two sides to get a walk $w=(t_1=u_0, \cdots, u_{\tilde{l}}=t_2)$, where $\tilde{l}$ is the length of the combined walk.
    \item Add the edge $(t_1, t_2)$ to graph $\calH$ with weight $1/\left(\rho \tilde{l} \right)$.
\end{enumerate}

Then, the Laplacian matrix $\LL_{\calH}$ of the resulting graph $\calH$ satisfies $\LL_{\calH} \approx_{\epsilon} \SC(T)$ with probability of at least $1-O(1/n)$. 

\end{lemma}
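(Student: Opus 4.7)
The plan is to prove the lemma in two stages: first, show that the construction is unbiased in the sense $\mathbb{E}[\LL_{\calH}] = \SC(T)$, and second, upgrade this to the spectral approximation $\LL_{\calH} \Approx{\epsilon} \SC(T)$ using a matrix concentration inequality.

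For the unbiasedness step, the key identity is the random-walk expansion of the Schur complement. Expanding $\LL_{[F,F]}^{-1}$ via its Neumann series in random-walk probabilities on $F$ and substituting into $\SC(T) = \LL_{[T,T]} - \LL_{[T,F]}\LL_{[F,F]}^{-1}\LL_{[F,T]}$ yields a representation of $\SC(T)$ as a sum, indexed by edges $e \in E$ and by walks $w$ that traverse $e$ while connecting two vertices of $T$, of rank-one PSD matrices $\bb_{t_1 t_2}\bb_{t_1 t_2}^{\top}$ weighted by $p(w)/\tilde{l}(w)$. Here $p(w)$ is exactly the probability that the two random walks launched from the endpoints of $e$ in steps (1) and (2) realize the two halves of $w$, and $\tilde{l}(w)$ is the combined walk length. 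The sampling procedure in steps (1)--(4) draws from precisely this distribution, and the weight $1/(\rho\tilde{l})$ simultaneously compensates for the $\rho$-fold replication and for the $1/\tilde{l}$ factor inside the expansion, so summing over all samples yields $\mathbb{E}[\LL_{\calH}] = \SC(T)$.

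For the concentration step, I would write $\LL_{\calH} = \sum_{s} \XX_s$ with $\XX_s = \tfrac{1}{\rho\tilde{l}_s}\bb_{t_1^{(s)} t_2^{(s)}}\bb_{t_1^{(s)} t_2^{(s)}}^{\top}$ and apply the matrix Chernoff (Tropp) inequality on the range of $\SC(T)$. The quantity one has to bound is $\tfrac{1}{\rho\tilde{l}_s}\,\bb_{t_1 t_2}^{\top}\SC(T)^{\dag}\bb_{t_1 t_2}$, which measures the size of $\XX_s$ relative to $\SC(T)$. Because the Schur complement preserves effective resistances between vertices of $T$, the scalar $\bb_{t_1 t_2}^{\top}\SC(T)^{\dag}\bb_{t_1 t_2}$ equals $\mathcal{R}^{\calG}_{t_1 t_2}$, and this is at most $\tilde{l}_s$ since the combined walk itself exhibits a unit-weight path of length $\tilde{l}_s$ between $t_1$ and $t_2$ whose series resistance upper-bounds $\mathcal{R}^{\calG}_{t_1 t_2}$. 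Hence each $\XX_s$ contributes at most $1/\rho$ in this normalized sense, and matrix Chernoff with $\rho = \Theta(\epsilon^{-2}\log n)$ delivers $\LL_{\calH} \Approx{\epsilon} \SC(T)$ with probability $1 - O(1/n)$.

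The main obstacle is the unbiasedness step: aligning the term-by-term Neumann expansion of $\SC(T)$ with the edge-indexed random-walk sampling distribution so that the $1/\tilde{l}$ weighting emerges naturally from the combinatorics. Once that identity is pinned down, the concentration argument is a textbook application of matrix Chernoff, and the kernel mismatch between $\SC(T)$ and $\LL_{\calH}$ creates no issue because both are Laplacians of graphs on $T$ and share the same one-dimensional kernel spanned by $\one$.
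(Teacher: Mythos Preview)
The paper does not prove this lemma; it is quoted verbatim as a black-box result from \cite{DuGaGoPe19} (building on \cite{ChChLiPeTe15b,DuPePeRa17,KyLePeSaSp16}), so there is no in-paper proof to compare against. Your two-step plan---establish $\mathbb{E}[\LL_{\calH}]=\SC(T)$ via the random-walk/Neumann expansion of the Schur complement so that the $1/\tilde{l}$ weight cancels the $\tilde{l}$-fold overcounting when sampling walks by a constituent edge, then apply matrix Chernoff with the leverage bound $\bb_{t_1t_2}^{\top}\SC(T)^{\dag}\bb_{t_1t_2}=\mathcal{R}^{\calG}_{t_1t_2}\le\tilde{l}$---is correct and is exactly the argument carried out in those references.
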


 Based on Lemma~\ref{lem:appsc}, we can approximate the Schur complement for an arbitrary terminal set $T\subseteq V$ and obtain a graph $\calH$ satisfying $\LL_{\calH} \approx_{\epsilon} \SC(T)$. Let $\tilde{\mathcal{R}}_{xy}$ be the effective resistance between any pair of nodes $x,y\in T$ on graph $\calH$, then based on the fact that matrix approximations also preserve approximations of their quadratic forms, we have
\begin{equation}\label{eq:approx_uv}
    \tilde{\mathcal{R}}_{xy} \approx_\epsilon \mathcal{R}_{xy}^{\calG(\calS(T))} = \mathcal{R}_{xy}^\calG.
\end{equation}

\subsubsection{Approximation of Effective Resistance}
To approximate $\mathcal{R}_{u\lea}$, a direct way is to set $T=\{u,v\}$, and then estimate $\mathcal{R}_{u\lea}$ as the reciprocal of the weight between $u$ and $\lea$ in the resulting graph $\calH$ from Lemma~\ref{lem:appsc}. However, even if the target node $\lea$ is fixed, the approximation of the effective resistance between $v$ and all other nodes is expensive as it may result in redundant walk sampling. To address this issue, we carefully modify the sampling steps in Lemma~\ref{lem:appsc}.

First, we set $T=\{\lea\}$, and sample an initial collection of walks using steps (1)-(3) in Lemma~\ref{lem:appsc}. For each node $u$, we set $T_2=\{u,\lea\}$, and traverse and shorten all the walks at the first position they hit $T_2$, then add the weight of edge $(u,v)$ to the two-node graph $\calH$. This yields a new collection of random walks and an approximation of $\calS(T_2)$. However, repeating this process for every node takes at least $\Omega(mn(\log n) /\epsilon^2)$ time, which is computationally infeasible for large networks. 

We notice that most nodes may appear in a small fraction of the sampled walks; thus, we do not necessarily need to traverse all the walks for each node $u$. So we propose an approach where we traverse each sampled walk exactly once. More precisely, for any walk $w$, we traverse it once and keep track of which nodes appear accompanied by the positions they first appear on both sides. If a node $u$ is only encountered on one side of the walk, setting $T_2=\{u,\lea\}$ will contribute an edge weight to the resulting graph $\calH$. For example, consider the walk in Fig.~\ref{fig:update}(a)-(b)  which starts from the red edge and initially stops at $\lea$. By setting $T_2=\{u, \lea\}$, this walk contributes a weight of $\frac{1}{8\rho}$ to edge $(u,\lea)$ in $\calH$. After summing up the weights contributed by all the walks, we can approximate $\mathcal{R}_{u\lea}$ as the reciprocal of the weight of edge $(u,\lea)$ in $\calH$.

\begin{figure*}[h!]
  \centering
  \includegraphics[width=0.8\linewidth]{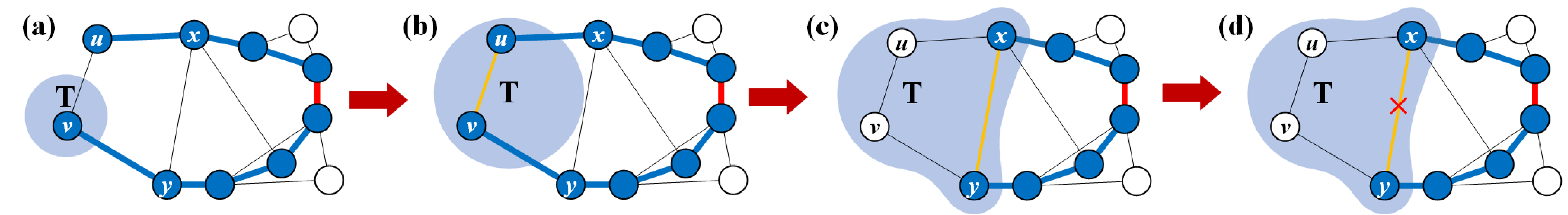}
  \caption{Pipeline of our algorithm. (a) Set $T=\{\lea\}$ and sample an initial walk (colored in blue) starting from the red edge. (b) For an initial approximation of the effective resistance $\mathcal{R}_{u\lea}$, set $T_2=\{u, \lea\}$ and shortcut the initial walk at the first positions it hit $T$ such that it contributes a weight of $\frac{1}{8\rho}$ to edge $(u,\lea)$ (colored in yellow) in the approximate graph. (c) When trying to remove edge $e = (x,y)$, set $T_4=\{u, \lea, x, y\}$ and shortcut the walk at the first positions it hit $T$. (d) After modifying all walks, reduce the weight of edge $(x,y)$ in the approximate graph by 1.}
  \label{fig:update}
\end{figure*}

In summary, we sample random walks according to Lemma~\ref{lem:appsc} by setting $T=\{\lea\}$, and approximate the effective resistances between node $v$ and all nodes $u \in V\backslash \{v\}$ by traversing the sampled walks once. According to Equation~\eqref{eq:approx_uv}, we can approximate the resistance distance $\mathcal{R}_v$ by 
$
    \tilde{\mathcal{R}}_v= \sum_{u\in V\backslash \{v\}} \tilde{\mathcal{R}}_{uv},
$
which satisfies $\sizeof{\mathcal{R}_{\lea} - \tilde{\mathcal{R}}_{\lea}} \leq \varepsilon \mathcal{R}_{\lea} \leq n\epsilon \phi$,
where $\phi$ is the effective resistance diameter of the network~\cite{XuZh23}.

Following Lemma~\ref{lem:appsc}, we need to sample $O(m(\log n)/\epsilon^{2})$ walks. Another critical factor that we need to take into account is the length of sampled walks with an average value of $l_{\text{avg}} = \sum_{u \in V\backslash \{v\}}2\dd_u F_{u,v}/m$, where $F_{u,v}$ represents the hitting time from node $u$ to node $v$. However, some walks may be excessively long, making our computations expensive. To address this issue, we adopt the $l$-\textit{truncated random walk} concept~\cite{SaMoPr08}, where walks are accepted if they shorter than $l$, and disregarded otherwise. Of course, this would result in less accurate solutions. However, we pursue to balance accuracy and efficiency in the choice of $l$. We should stress that this extension is based on the observation that a walk's contribution to a related edge in $\calH$ decreases as its length increases, with a walk longer than $l$ contributing less than $1/(\rho l)$ to the corresponding edge. The following lemma ensures that the expected ratio of invalid walks can be made arbitrarily small for a suitably chosen value of $l$.

\begin{lemma}\cite{zhou2023opinion} \label{lem:maxlen}
Given a connected graph $\calG=(V,E)$, a node set $T=\{v\}$, and a ratio of invalid walks $\gamma>0$, if the maximum length $l$ of random walks satisfies $l= \log(m\gamma /\sqrt{n-1}\norm{\dd_{-T}}_2)/\log(\lambda)$, where $\lambda$ is the spectral radius of matrix $\PP_{-T}$, then the expected ratio of invalid walks is less than $\gamma$.
\end{lemma}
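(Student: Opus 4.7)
The plan is to express the expected ratio of invalid walks as a quadratic-form-like expression involving the restricted transition matrix $\PP_{-T}$, bound it using a Cauchy--Schwarz / Rayleigh-quotient argument, and then solve for $l$. First I would note that by Lemma~\ref{lem:appsc} with $T=\{v\}$, for each edge $e=(i,j)\in E$ and each of the $\rho$ repetitions, exactly two walks are sampled: one from $i$ and one from $j$. A walk from a starting node $u\neq v$ is invalid iff the walk of length $l$ never hits $v$; the probability of this event is precisely $p_u^{(l)} := \ee_u^\top \PP_{-T}^l \one$, since $\PP_{-T}^l$ records the probability of walking $l$ steps in $V\setminus\{v\}$ without being absorbed at $v$. (If $u=v$, the walk terminates immediately and is valid, so $p_v^{(l)}=0$.) Summing over all walks and dividing by the total $2m\rho$ gives
\[
\mathbb{E}[\text{ratio of invalid walks}]=\frac{1}{2m\rho}\sum_{e=(i,j)\in E}\rho\,(p_i^{(l)}+p_j^{(l)})=\frac{1}{2m}\sum_{u\neq v}\dd_u\,p_u^{(l)}=\frac{1}{2m}\,\dd_{-T}^\top \PP_{-T}^l\,\one.
\]

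Second, I would symmetrize: $\PP_{-T}$ is similar to the symmetric matrix $\NN_{-T}=\DD_{-T}^{-1/2}\AA_{-T}\DD_{-T}^{-1/2}$ via $\PP_{-T}=\DD_{-T}^{-1/2}\NN_{-T}\DD_{-T}^{1/2}$, so the two matrices share eigenvalues and in particular $\NN_{-T}$ has spectral radius $\lambda$. Using $\dd_{-T}=\DD_{-T}\one$, one rewrites
\[
\dd_{-T}^\top \PP_{-T}^l\,\one=(\DD_{-T}^{1/2}\one)^\top\NN_{-T}^l(\DD_{-T}^{1/2}\one),
\]
which, being a Rayleigh-type form with a symmetric $\NN_{-T}$, is bounded by $\lambda^l\,\lVert\DD_{-T}^{1/2}\one\rVert_2^2=\lambda^l\sum_{u\neq v}\dd_u$. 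A further Cauchy--Schwarz gives $\sum_{u\neq v}\dd_u=\dd_{-T}^\top\one\leq \sqrt{n-1}\,\lVert\dd_{-T}\rVert_2$, so
\[
\mathbb{E}[\text{ratio of invalid walks}]\leq \frac{\sqrt{n-1}\,\lVert\dd_{-T}\rVert_2}{2m}\,\lambda^l.
\]

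Third, I would solve for $l$: demanding the right-hand side to be at most $\gamma$ and using $\log\lambda<0$ (which holds because $\lambda<1$ for the substochastic $\PP_{-T}$ coming from a connected graph with absorbing state $v$) yields exactly
\[
l\;\geq\;\frac{\log\bigl(m\gamma/(\sqrt{n-1}\,\lVert\dd_{-T}\rVert_2)\bigr)}{\log\lambda},
\]
which is the stated choice (the factor of $2$ is absorbed as a conservative safety margin). I expect the main obstacle to be the non-symmetry of $\PP_{-T}$: a direct application of Cauchy--Schwarz plus $\lVert \PP_{-T}^l\rVert_2\leq \lambda^l$ is not valid in general for non-symmetric matrices, so the key technical move is the symmetric similarity transformation to $\NN_{-T}$, after which the Rayleigh bound becomes routine. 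The remainder of the argument is bookkeeping: counting walks per endpoint to get the factor $2m$, and the two applications of Cauchy--Schwarz to collapse the sum into $\sqrt{n-1}\,\lVert\dd_{-T}\rVert_2$.
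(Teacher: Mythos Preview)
The paper does not supply its own proof of this lemma; it simply cites \cite{zhou2023opinion}. So there is nothing in the present paper to compare your argument against. That said, your argument is sound. The key steps---writing the non-hitting probability from $u$ as $\ee_u^\top\PP_{-T}^l\one$, aggregating over edge endpoints to get $\frac{1}{2m}\dd_{-T}^\top\PP_{-T}^l\one$, passing to the symmetric normalized matrix $\NN_{-T}=\DD_{-T}^{-1/2}\AA_{-T}\DD_{-T}^{-1/2}$ to obtain a Rayleigh bound, and then Cauchy--Schwarz to reach $\sqrt{n-1}\,\lVert\dd_{-T}\rVert_2$---are all correct, and your observation that a direct operator-norm bound on $\PP_{-T}^l$ would fail for non-symmetric matrices is exactly the right diagnosis of where the subtlety lies.

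One minor remark on the factor of $2$: your derivation yields $\frac{\sqrt{n-1}\,\lVert\dd_{-T}\rVert_2}{2m}\lambda^l$, whereas the stated $l$ corresponds to a bound of $\frac{\sqrt{n-1}\,\lVert\dd_{-T}\rVert_2}{m}\lambda^l$. This discrepancy disappears if one interprets an ``invalid walk'' as a \emph{combined} walk (i.e., the pair $(w_1,w_2)$ in Lemma~\ref{lem:appsc}) being discarded whenever either half fails to hit $T$ within $l$ steps: there are then $m\rho$ combined walks and a union bound over the two halves absorbs the $2$. Your reading (counting half-walks) is also reasonable and gives a slightly sharper constant; either way the stated choice of $l$ suffices, so your proof stands.
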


\normalem
\begin{algorithm}[htbp]
  \caption{\textsc{Initialization}$(\calG, \lea, Q, l, \epsilon)$}
  \label{alg:init}
  \Input{
     A connected graph $\calG=(V,E)$; a node $\lea \in V$; a node set $Q$; the maximum length $l$; and a real number $\epsilon \in (0,1)$
  }
  \Output{
    Approximate effective resistance array $\mathcal{R}$
 }
  
Let $\rho = O((\log n)\epsilon^{-2})$; $\calC\gets\mathbf{0}_{n\times 1}$, $\mathcal{R}\gets\mathbf{0}_{n\times 1}$, $T \gets \{\lea\}$ and $W\gets\emptyset$\;
\For{each edge $e=(i,j)\in E$ and $p=1, \cdots, \rho$}{
    Generate a random walk $w_1(e,p)$ from $i$ until it reaches $T$ or its length is $l$\;
    Generate a random walk $w_2(e,p)$ from $j$ until it reaches $T$ or its length is $l$\;
    \If{both walks reach $T$}{
    Combine them to form a walk $w(e,p)$\;
    Add $w(e,p)$ to $W$\;
    }
}
\For{each walk $w \in W$}{
    \For{each node $u \in w\cap Q$}{
        \If{$u$ appears first on one side (A) and not on the other (B)}{
        Let $\tilde{l}$ be the length from $u$ to $\lea$ on $B$ side\;
        Update $\calC\left[u\right]\gets\calC\left[u\right]+1/(\rho \tilde{l})$
        }
    }
}
Let $\mathcal{R}_i \gets 1/\mathcal{C}_i$ for all $i=1,2,\ldots,|\mathcal{C}|$\;
\Return $\mathcal{R}$ and $W$
\end{algorithm}

Based on above analysis, we propose an algorithm \textsc{Initialization} which returns an array $\mathcal{R}$ containing the effective resistances between $\lea$ and all nodes in a given set $Q$ (with $Q=V\backslash\{\lea\}$ in this section), together with a collection of random walks $W$ for future computations. The outline of \textsc{Initialization} is presented in Algorithm~\ref{alg:init}.

\subsubsection{Updating Effective Resistance}
The removal of an edge alters the effective resistances. Recalculating them by resampling walks from scratch is computationally inefficient. To address this issue, we propose an efficient algorithm, called \textsc{DeleteEdge}, which updates effective resistances by modifying the existing walks upon edge removal, as outlined in Algorithm~\ref{alg:delete}. Before discussing the algorithm in more detail, we introduce a node-walk mapping data structure, which maps nodes to the walks that contain them to facilitate the subsequent computation of effective resistances, and is constructed as follows.
\begin{itemize}[leftmargin=*]
    \item First, assign two pristine arrays to each node: the walk array and the position array, aimed at capturing the walks that the node participates in and their corresponding positions, respectively.
    \item Systematically traverse any walk $w\in W$, and scrutinize any node $u$ that is encountered for the first time at position $p$ on either side of the walk. Then append $w$ and $p$ to the walk array and the position array corresponding to $u$.
\end{itemize}
The following example illustrates how this data structure works.
\begin{figure}[h!]
  \centering
  \includegraphics[width=0.9\columnwidth]{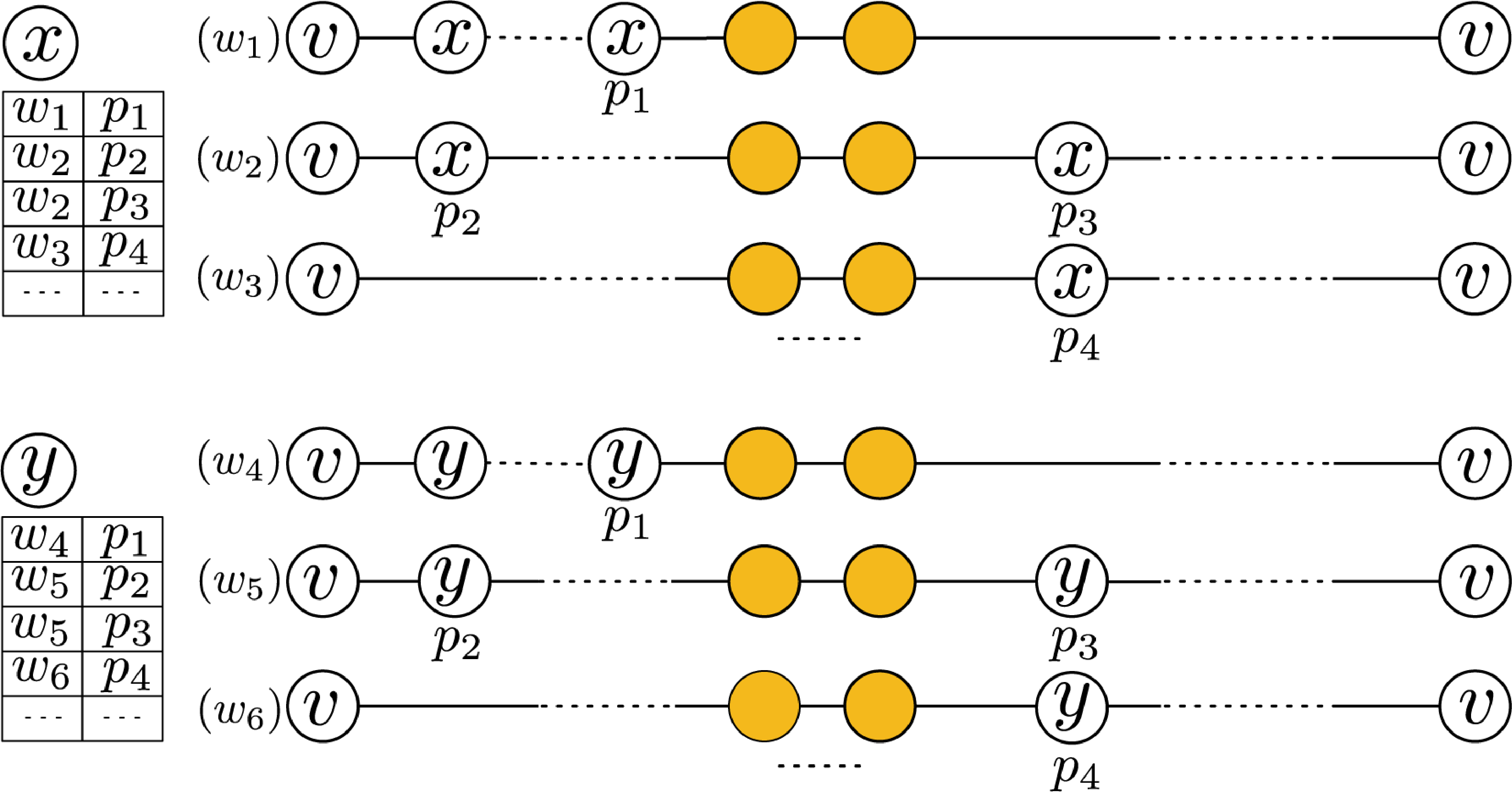}
  \caption{Illustration of the node-walk mapping data structure.}
  \label{fig:node-walk}
\end{figure}

\textbf{Example.} Fig.~\ref{fig:node-walk} demonstrates two instances of the node-walk map. Specifically, node $x$ is successively encountered at positions $p_1$ and $p_2$ on the left
\normalem
\begin{algorithm}[htbp]
  \caption{\textsc{DeleteEdge}$(\calG, W, \lea, Q, \del)$}
    \label{alg:delete}
  \Input{
  A connected graph $\calG=(V,E)$; a collection of random walks $W$; a node $\lea \in V$; a node set $Q$; a set $\del$ of removed edges
     }
     \Output{
     $\{(e, \bar{\mathcal{R}}_{\lea}(\del \cup \{e\}))| e \in E\}$
     }
    \For{each edge $e=(x,y)\in E$}{
        Identify the set of walks $\tilde{W}$ that $x$ or $y$ appears on using the node-walk map\;
        \For{$w \in \tilde{W}$}{
        \For{$u \in w\cap Q$}{
            $T\gets \{u, \lea,x,y\}$\;
            Update the edge weights of $\calH_u$\; 
        }}
            \If{$\calH_x$ \textbf{or} $\calH_y$ is disconnected}{
            Set $\bar{\mathcal{R}}_{\lea}(\del \cup \{e\})=0$\;} 
            \Else{
            \For{$u \in Q$}{
            Update the effective resistance $\tilde{\mathcal{R}}_{u\lea}$\;}
        $\bar{\mathcal{R}}_{\lea}(\del \cup \{e\})\gets\sum_{u\in Q}\tilde{\mathcal{R}}_{u\lea}$\;} 
    }
\Return $\{(e, \bar{\mathcal{R}}_{\lea}(\del\cup \{e\}))| e \in E\}$
\end{algorithm}
side of walk $w_1$, so we append $w_1$ and $p_1$ to the walk array and the position array corresponding to $x$. For walk $w_2$, node $x$ appears on both sides, leading us to record $p_2$ and $p_3$. In turn, for walk $w_3$, we document position $p_4$, where node $x$ is first encountered on the right side. Similar steps can be taken to fill the walk array and position array for node $y$.

Utilizing the node-walk mapping data structure, we present an efficient method for updating the effective resistances upon removal of an edge $e=(x,y)$. The core idea is to reduce the edge removal operation to adding nodes to set $T$ as shown in Fig.~\ref{fig:update} (c)-(d). This reduction is feasible because if nodes $x,y\in T$, then removing edge $e$ in the original graph $\calG$ equates to a decrement in edge $e$'s weight from the Schur complement graph $\calG(\SC(T))$ by $1$. In the following analysis, we fix node $u$. We first sample the random walks with $T=\{v\}$ and then set $T_4=\{u,v,x,y\}$ and denote the resulting approximate Schur complement graph as $\calH_u$. To modify these walks, we utilize the node-walk map to expeditiously determine the walks and positions of nodes $u,x,y$. The corresponding walks are then shortened and the edge weights in graph $\calH_u$ are updated. The final step involves subtracting the edge weight between $x$ and $y$ by $1$. The sufficient condition for the reduced graph being connected is that node $x$ remains accessible from node $y$, which suffices to demonstrate that $\lea$ is reachable from both $x$ and $y$, which can be readily assessed using the updated four-node graph. Consequently, the connectivity of the reduced graph and the new effective resistance between nodes $u$ and $v$ can be easily determined.

Although this method is simple, it still could take $\Omega(mn)$ time for updating all pairs of effective resistances for any edge. However, thanks to the initialization process, we store the information of the Schur complement onto each two-node set $T_2=\{u,v\}$ for each $u\in V\backslash \{v\}$. This would facilitate us to update the four-node graph efficiently. Specifically, we use the node-walk mapping data structure to locate the positions of nodes $x$ and $y$. Then, we shorten walks and update edge weights for all possible four-node sets $T_4=\{q,v,x,y\}$ where $q$ is a node co-occurring with $x$ or $y$ on any walk, and update the edge weights of $\calH_q$. More accurately, for edges $(q,v), (x,v), (y,v)$, the weights are derived from subtracting the weight of $\calG(\mathcal{S}(T_2))$ by the weight reduction due to walk truncation, and for edges $(q,x), (q,y), (x,y)$, the weights are computed by the newly specified walk segments. Finally, we subtract the edge weight between $x$ and $y$ by 1, check the connectivity and compute the effective resistance on the updated four-node graph. We note that the average occurrence of each node in all sampled random walks is $O(mn^{-1}l(\log n)\epsilon^{-2})$, thus the overall time complexity is reduced to $O(m^2n^{-1}l^2(\log n)\epsilon^{-2})$.

To visualize the process of modifying walks, see the walk in Fig.~\ref{fig:update}, which is affected by the removal of edge $(x,y)$. This walk no longer contributes weight to edge $(u,v)$, but contributes weight to edge $(x,y)$.

\subsubsection{A Sampling Algorithm for \InforCenMin}
By integrating the techniques for initializing and modifying random paths to approximate effective resistances, we offer a fast algorithm \ApproxiSC to address the problem \InforCenMin. The pseudocode of this algorithm is summarized in Algorithm \ref{alg:appsc}. Besides the input graph $\calG$, the cardinality constraint $k$ and the target node $\lea$, its input also contains an integer $l$ that constrains the maximum length of random walks and an error parameter $\epsilon$. We first set $\del\gets \emptyset$. Then in each iteration, we initialize the random walks for the current graph by making a call to \textsc{Initialization}. Then, we use \textsc{DeleteEdge} to obtain a set of tuples consisting of an edge and the corresponding $\tilde{\mathcal{R}}_{\lea}$ in the reduced graph. The edge that results in the largest margin gain of $\tilde{\mathcal{R}}_{\lea}$ while preserving graph connectivity will be selected. This process is repeated until $k$ edges have been added to $\del$. 
\normalem
\begin{algorithm}[htbp]
  \caption{$\ApproxiSC(\calG, k, \lea, l, \epsilon)$}
    \label{alg:appsc}
  \Input{
      A connected graph $\calG=(V,E)$; an integer $k<m$; a node $\lea \in V$; the length constraint $l$; an error parameter $\epsilon\in(0,1)$
  }
  \Output{
    An edge set $\del \subset E$ satisfying constraint $k$
  }
Set $\del \gets \emptyset$\;
\For{$i = 1$ to $k$}{
$\mathcal{R} ,W \gets \textsc{Initialization}(\calG, \lea, V, l, \epsilon)$ \;
$\{(e, \tilde{\mathcal{R}}_{\lea}(\del \cup \{e\}))| e \in E\} \leftarrow \textsc{DeleteEdge}(\calG, W, \lea, V, \del)$\;
Select $e_i$ s.t. $e_i \leftarrow \argmax_{e \in E} \tilde{\mathcal{R}}_{\lea}(\del \cup \{e\})$\;
Update solution $\del\gets \del \cup \{e_i\}$\;
Update the graph $\calG \leftarrow (V,E \backslash \{e_i\})$
}
\Return $\del$
\end{algorithm}

We next analyze the time complexity of \ApproxiSC, which consists of two parts: approximating the effective resistances and updating them upon edge removal for all existing edges. In the updating process, we also need to check the connectivity of the resulting graph. Generating the random walks takes $O(ml(\log n) \epsilon^{-2})$ time, initializing the effective resistances takes $O(ml(\log n)\epsilon^{-2})$ time, and updating them takes $O(m^2n^{-1}l^2(\log n) \epsilon^{-2})$ time.
Furthermore, the connectivity can be checked relatively quickly using the derived graph $\calH$. In summary, the overall time complexity of our proposed algorithm \ApproxiSC is in $O(kml(\log n) \epsilon^{-2} + km^2n^{-1}l^2(\log n) \epsilon^{-2})$.
\subsection{A More Efficient Algorithm}
\label{sec:fastsc}

The simple sampling algorithm $\ApproxiSC$ is much faster than the original deterministic algorithm we began with. Nevertheless, in this section we further reduce its time complexity by proposing a faster sampling algorithm, with a similar accuracy guarantee.

\subsubsection{Fast Simulating Random Walks}\label{fast_update}
Each iteration of $\ApproxiSC$ involves simulating $m\lceil(\log n)/\epsilon^2\rceil$ $l$-truncated random walks, which is computationally expensive as $k$ grows.
However, we observe adding an edge $e$ to $\del$ only impacts walks that go through it. Hence, we can speed up the process by only modifying a small fraction of walks when $e$ is deleted, while reusing the rest for subsequent approximations. Next, we elaborate on this more efficient approach.

First, we sample random walks for initialization. Then in each iteration, after selecting an edge $e$, we modify the affected walks. To efficiently locate these walks, we set up an edge-walk mapping data structure. This data structure, similar to the node-walk mapping data structure shown in Fig.~\ref{fig:node-walk}, records the walks and the positions where $e$ first appears, which can be built once the walks are sampled. For each affected walk $w$, we truncate it at the first appearance of edge $e$, and extend it with a new random walk from the truncated position to $\lea$, or until the total length reaches $l$. Both edge-walk and node-walk map, as well as the effective resistance array $\mathcal{R}$, are then updated. Since the average number of walks that include edge $e$ is $O(mn^{-1}l(\log n)\epsilon^{-2})$, updating the walks is efficient.

\subsubsection{Fast Approximation of the Resistance Distance}
\ApproxiSC approximates effective resistances between the target node $\lea$ and all other nodes in the network. However, it is only the sum of these resistances, $\mathcal{R}_\lea$, that is of interest, rather than individual ones. Next, we show that evaluating $\mathcal{R}_\lea$ can be achieved through computing the effective resistances between $\lea$ and a smaller subset $\samv\subseteq V$.

Based on the techniques developed in previous sum estimation works~\cite{feigesum2006,sum2022}, we show that the sum of $n$ bounded elements can be approximated by a sample of them in the following lemma.

\begin{lemma}\label{lem:sqrtsum}
Given $n$ bounded elements $x_1,x_2,\ldots,x_n \in [0,a]$, an error parameter $\beta>an^{-1/2}\log^{1/2} n$, we randomly select $t=O(a\sqrt{n(\log n)}/\beta)$ elements $x_{c_1},x_{c_2},\ldots,x_{c_t}$ by Bernoulli trials with success probability $p=an^{-1/2}\log^{1/2} n/\beta$ satisfying $0<p<1$. We have $\bar{x}=\sum_{i=1}^t x_{c_i}/p$ as an approximation of the sum of the original $n$ elements $x=\sum_{i=1}^n x_i$, satisfying $|x-\bar{x}|\leq n\beta$.
\end{lemma}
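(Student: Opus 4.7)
\textbf{Proof plan for Lemma~\ref{lem:sqrtsum}.} The plan is to view $\bar x$ as a sum of independent bounded random variables, show that it is an unbiased estimator of $x$, and then invoke a Hoeffding-type concentration inequality to bound its deviation from $x$ by $n\beta$ with high probability.

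First, I would introduce independent Bernoulli indicators $Y_1,\ldots,Y_n$, where $Y_i=1$ if element $i$ is sampled and $0$ otherwise, each with success probability $p$. Writing
\[
\bar x \;=\; \frac{1}{p}\sum_{i=1}^{n} x_i\, Y_i,
\]
linearity of expectation yields $\mathbb{E}[\bar x]=\sum_{i=1}^n x_i = x$. Each term $Z_i := x_i Y_i/p$ lies in $[0, a/p]$ since $0\le x_i\le a$, so $\bar x = \sum_i Z_i$ is a sum of independent random variables with ranges uniformly bounded by $a/p$. The standing assumption $\beta > a n^{-1/2}\sqrt{\log n}$ is precisely what guarantees $p\in(0,1)$, so the Bernoulli sampling scheme is well defined.

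Next, I would apply Hoeffding's inequality to this sum of bounded independent variables, which yields
\[
\Pr\!\left[\,|\bar x - x|\ge n\beta\,\right]
\;\le\; 2\exp\!\left(-\frac{2(n\beta)^{2}}{\sum_{i=1}^{n}(a/p)^{2}}\right)
\;=\; 2\exp\!\left(-\frac{2 n p^{2}\beta^{2}}{a^{2}}\right).
\]
Substituting $p = a n^{-1/2}\sqrt{\log n}/\beta$ collapses the exponent to $-2\log n$, so the failure probability is at most $2/n^{2}$. Consequently $|x-\bar x|\le n\beta$ holds with high probability, which is the desired conclusion.

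Finally, I would justify the claimed sample size. The number of selected elements $t=\sum_{i=1}^{n}Y_i$ is binomial with mean $np = a\sqrt{n\log n}/\beta$, and a standard Chernoff bound shows that $t$ stays within a constant factor of $np$ with high probability, matching the $O(a\sqrt{n\log n}/\beta)$ bound stated in the lemma. The main obstacle, if there is one, is simply calibrating the Hoeffding tail so that the exponent lands on $-2\log n$ after substitution; the rest is essentially mechanical, and a Bernstein-type inequality would give the same conclusion had the $x_i$'s additionally been small in second moment.
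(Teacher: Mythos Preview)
Your proof is correct. The paper itself does not prove Lemma~\ref{lem:sqrtsum}; it simply states the lemma and attributes the underlying technique to prior sum-estimation work~\cite{feigesum2006,sum2022}. Your argument via Hoeffding's inequality on the sum of independent bounded variables $Z_i=x_iY_i/p$ is exactly the standard way to establish such a guarantee, and the calibration you carry out (substituting $p=a n^{-1/2}\sqrt{\log n}/\beta$ to make the exponent equal $-2\log n$) is precisely what is needed. Since the paper offers no proof to compare against, your write-up in fact supplies the missing justification; the only point worth making explicit is that the conclusion $|x-\bar x|\le n\beta$ holds with high probability (at least $1-2/n^2$), which the lemma statement leaves implicit.
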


Based on the above lemma, let $\phi$ be the effective resistance diameter, $\alpha$ be an error parameter, $\beta=\frac{\alpha}{2}$, and $\epsilon=\frac{\alpha}{2\phi}$. Let $t=O(\phi\sqrt{n(\log n)}/\beta)$, $\tilde{V}=\{x_1,x_2,\ldots,x_t\}$ be a randomly selected subset, and $\hat{\mathcal{R}}_\lea =\sum_{u\in \samv} \tilde{\mathcal{R}}_{u\lea}n/(\phi\sqrt{n(\log n)}/\beta)$, $\tilde{\mathcal{R}}_\lea$ can be approximated by $\hat{\mathcal{R}}_\lea$, satisfying 
\begin{align}
    |\hat{\mathcal{R}}_\lea-\tilde{\mathcal{R}}_\lea|\leq n\beta,
\end{align}
and further $\mathcal{R}_\lea$ can be approximated by $\hat{\mathcal{R}}_\lea$ satisfying
\begin{align}
    |\mathcal{R}_\lea -\hat{\mathcal{R}}_\lea | \leq n\alpha.
\end{align}

By setting $Q=\tilde{V}$, \textsc{Initialization} and \textsc{DeleteEdge} can be simplified by just approximating and updating effective resistances between node $v$ and nodes in set $\tilde{V}$.
\subsubsection{Fast Algorithm for \InforCenMin}
Equipped with the techniques of fast sampling of random walks and efficient computing of the sum of effective resistances, we are now prepared to propose a faster algorithm \FastSC for Problem~\ref{pro:inforMini}, outlined in Algorithm~\ref{alg:fast}. 

\FastSC performs $k$ rounds (Lines 5-9) for iteratively selecting $k$ edges. Given a network $\calG=(V,E)$ with an effective resistance diameter $\phi$, we randomly select a node set $\samv \subset V$ of $O(\phi\sqrt{n(\log n)}/\beta)$ nodes, simulate $m\lceil\phi^2(\log n)/\alpha^2\rceil$ $l$-truncated random walks, and approximate the effective resistances. It takes $O(ml(\log n) \phi^2\alpha^{-2}+ mn^{-1/2}l\log^{3/2} n\alpha^{-3}\phi^3)$ time for the three operations. Then, the algorithm takes $O(km^2n^{-3/2}l^2\log^{3/2} n\alpha^{-3}\phi^3)$ time to update the sum of effective resistances for each candidate edge in all $k$ rounds. Thus, the overall time complexity of \FastSC is $O(ml(\log n) \phi^2\alpha^{-2}$ $+ mn^{-1/2}l\log^{3/2} n\alpha^{-3}\phi^3(1+kml/n))$. Theorem~\ref{the:sqt_performance} characterizes the performance of \FastSC.

\begin{theorem}\label{the:sqt_performance}
    For any $k>0$, an error parameter $\alpha\in(0,1)$, a maximum length $l$ of walks, and the effective resistance diameter $\phi$, \FastSC runs in $O(ml(\log n) \phi^2\alpha^{-2} + mn^{-1/2}l\log^{3/2} n\alpha^{-3}\phi^3(1+kml/n))$ time, and outputs a solution set $\del$ by greedily selecting $k$ edges, such that for the edge selected in each iteration, the information centrality of the target node is maximally decreased. 
\end{theorem}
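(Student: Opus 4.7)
The statement has two components: a running-time bound and a correctness claim about the greedy selection. I plan to analyse them separately and then combine them.

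For the running time I will track the cost of each phase of Algorithm~\ref{alg:fast} across the $k$ outer iterations. With the calibration $\epsilon = \alpha/(2\phi)$ and $\beta = \alpha/2$ that precedes the algorithm, the sampled subset $\samv$ has size $t = O(\phi\sqrt{n\log n}/\beta) = O(n^{1/2}\log^{1/2}n\cdot\alpha^{-1}\phi)$ and the number of walks started per edge is $\rho = O(\phi^{2}\alpha^{-2}\log n)$. Sampling all $m\rho$ truncated walks, each of length at most $l$, costs $O(ml\log n\cdot\phi^{2}\alpha^{-2})$. The \textsc{Initialization} traversal that fills the node--walk map and produces the array $\mathcal{R}$ restricted to $Q=\samv$ then scans each walk once, paying $O(l)$ per node in $\samv$ encountered; a straightforward expectation calculation (each node appears in a fraction $O(mn^{-1}l\log n\cdot\phi^{2}\alpha^{-2})$ of walks) yields the bound $O(mn^{-1/2}l\log^{3/2}n\cdot\alpha^{-3}\phi^{3})$ for initialization.

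Next I will bound the per-iteration cost. When an edge $e$ is selected and added to $\del$, the fast re-simulation procedure of Section~\ref{fast_update} touches only the walks that pass through $e$; using the edge--walk map their expected number is $O(mn^{-1}l\log n\cdot\phi^{2}\alpha^{-2})$, and each is patched in $O(l)$ time, which is absorbed by the initialization term. The dominant per-iteration work is the call to \textsc{DeleteEdge}: for every candidate edge $e'=(x,y)$ we visit, via the node--walk map, the walks on which $x$ or $y$ first appears, and for each co-occurring node $q\in\samv$ update a constant-size four-node graph $\calH_{q}$ in $O(1)$ time. Multiplying the number of candidate edges, the expected co-occurrence count, and $|\samv|$, and then summing over all $k$ iterations gives $O(km^{2}n^{-3/2}l^{2}\log^{3/2}n\cdot\alpha^{-3}\phi^{3})$; combining with the initialization phase gives the claimed total
\begin{equation*}
O\!\left(ml\log n\cdot\phi^{2}\alpha^{-2} + mn^{-1/2}l\log^{3/2}n\cdot\alpha^{-3}\phi^{3}\bigl(1+kml/n\bigr)\right).
\end{equation*}

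For the correctness of the greedy choice I will chain the two approximations. Lemma~\ref{lem:appsc} applied with terminal set $T_{4}=\{u,\lea,x,y\}$ gives $\LL_{\calH_{u}}\Approx{\epsilon}\SC(T_{4})$, so preservation of quadratic forms yields $\tilde{\mathcal{R}}_{u\lea}\Approx{\epsilon}\mathcal{R}_{u\lea}^{\calG\setminus P}$ for each $u\in\samv$. Summing and using the calibration $\epsilon=\alpha/(2\phi)$ together with the resistance-diameter bound $\mathcal{R}_{\lea}\le n\phi$ gives $|\mathcal{R}_{\lea}-\tilde{\mathcal{R}}_{\lea}|\le n\alpha/2$; then Lemma~\ref{lem:sqrtsum} with $\beta=\alpha/2$ bounds the Bernoulli-sampling error $|\hat{\mathcal{R}}_{\lea}-\tilde{\mathcal{R}}_{\lea}|$ by $n\alpha/2$, so the triangle inequality gives $|\mathcal{R}_{\lea}-\hat{\mathcal{R}}_{\lea}|\le n\alpha$ as displayed before the theorem. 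Via Equation~\eqref{ItoR} this translates into a bounded distortion of $\calI_{\lea}$, so selecting $e_{i}=\argmax_{e}\hat{\mathcal{R}}_{\lea}(\del\cup\{e\})$ picks an edge whose removal is, up to the additive error above, optimal among single-edge deletions for decreasing the target node's information centrality.

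The subtle part of the argument is juggling the two error budgets so that the final additive error is exactly $n\alpha$: the choice $\epsilon=\alpha/(2\phi)$ has to absorb the $\phi$ factor in $|\mathcal{R}_{\lea}-\tilde{\mathcal{R}}_{\lea}|\le n\epsilon\phi$, and $\beta=\alpha/2$ has to match it in the sum-estimation step; everything else, including verifying the expected co-occurrence counts that underlie the running-time accounting, is routine bookkeeping of the nested sampling parameters.
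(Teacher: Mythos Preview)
Your proposal is correct and follows essentially the same route as the paper, which does not give a standalone proof of Theorem~\ref{the:sqt_performance} but relies on the running-time accounting and error-chaining displayed in the text of Section~\ref{sec:fastsc} immediately before the theorem. Your decomposition into the initialization phase, the per-iteration \textsc{DeleteEdge} cost, and the walk-patching step matches the paper's breakdown term for term, and your correctness argument (combining the $\epsilon$-approximation from Lemma~\ref{lem:appsc} with the sum-estimation bound of Lemma~\ref{lem:sqrtsum} via the calibrations $\epsilon=\alpha/(2\phi)$ and $\beta=\alpha/2$) reproduces exactly the two displayed inequalities $|\hat{\mathcal{R}}_\lea-\tilde{\mathcal{R}}_\lea|\le n\beta$ and $|\mathcal{R}_\lea-\hat{\mathcal{R}}_\lea|\le n\alpha$ that the paper relies on; if anything, you spell out the expected co-occurrence counts more explicitly than the paper does.
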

\normalem
\begin{algorithm}[htbp]
  \caption{$\FastSC(\calG, k, \lea, l, \alpha,\phi)$}
    \label{alg:fast}
  \Input{
      A connected graph $\calG=(V,E)$; an integer $k<m$; a node $\lea \in V$; the maximum length of random walks $l$; a real number $\alpha\in(0,1)$; the effective resistance diameter $\phi$
  }
  \Output{
    An edge set $\del \subset E$ satisfying constraint $k$
  }
Set $\beta=\frac{\alpha}{2}$ and $\epsilon=\frac{\alpha}{2\phi}$; $\del \gets \emptyset$\;
Sample the node set $\samv \subset V$ of $t=O(\phi\sqrt{n(\log n)}/\beta)$\;
$\mathcal{R} ,W \gets \textsc{Initialization}(\calG, \lea, \samv, l, \epsilon)$ \;
\For{$i = 1$ to $k$}{
$\{(e, \hat{\mathcal{R}}_{\lea}(\del \cup \{e\}))| e \in E\} \leftarrow \textsc{DeleteEdge}(\calG, W, \lea, \samv, \del)$\;
Select $e_i$ s.t. $e_i \leftarrow \argmax_{e \in E} \hat{\mathcal{R}}_{\lea}(\del \cup \{e\})$\;
Update solution $\del\gets \del \cup \{e_i\}$\;
Update the graph $\calG \leftarrow (V,E \backslash \{e_i\})$\;
Update the collection of walks $W$ following the method stated in Section~\ref{fast_update}\;
}
\Return $\del$
\end{algorithm}

\section{EXPERIMENT}\label{sec:experiment}
In this section, we evaluate the efficiency and effectiveness of the proposed algorithms through experiments on diverse real-world and synthetic networks of varying types and sizes.

\subsection{Experimental Setup}
In this section, we present the basic experimental settings, which encompass the machine configuration, datasets, baseline algorithms, and choices of parameters. 

\noindent\textbf{Machine Configuration.} Experiments, implemented in \textit{Julia}, are conducted on a Linux server with 32G RAM and 4.2 GHz Intel i7-7700 CPU and with a single thread. The source code is publicly available on \url{https://github.com/hahaabc/fasticm}.

\noindent\textbf{Datasets.} We test our algorithms on both real-world networks, from publically available datasets of Network Repository~\cite{RoAh15} and SNAP~\cite{LeSo16}, and synthetic graphs. The name and some statistics of the experimented real-life networks are presented in Table~\ref{tab:data}, sorted by the number of nodes.

Various synthetic graph models have been introduced to mimic the real-world networks by capturing fundamental properties consistently observed in such networks, such as small diameter and scale-free degree distribution~\cite{albert2002statistical}. We use the well-established and popular BA~\cite{albert2002statistical} and WS~\cite{WaSt98} models. The parameters for these graphs have been chosen such that the average degree closely matches that of real-world networks of comparable size, as in Table~\ref{tab:data}. (All experiments are conducted on the largest component of these graphs due to the connectivity requirement in our problem.) 

\begin{table}
\begin{center}
\caption{Some statistics of the experimented real-world networks. We denote the number of nodes and edges in the largest connected component by $n$ and $m$, respectively, and use $Dim.$ to represent the diameter of a network.}\label{SetNo}
\label{tab:data}
\normalsize
\resizebox{1\columnwidth}{!}{  \setlength{\tabcolsep}{0.8mm} {
\begin{tabular}{cccc|cccc}
\Xhline{2.5\arrayrulewidth}
\raisebox{-0.5ex}{Network} & \raisebox{-0.5ex}{$n$} & \raisebox{-0.5ex}{$m$} & \raisebox{-0.5ex}{$Dim.$} & \raisebox{-0.5ex}{Network} & \raisebox{-0.5ex}{$n$} & \raisebox{-0.5ex}{$m$} & \raisebox{-0.5ex}{$Dim.$}\\[0.5ex]
\hline
\raisebox{-0.5ex}{karate} & \raisebox{-0.5ex}{34} & \raisebox{-0.5ex}{78} & \raisebox{-0.5ex}{5} & \raisebox{-0.5ex}{Erd\"os} & \raisebox{-0.5ex}{6,927}& \raisebox{-0.5ex}{11,850} & \raisebox{-0.5ex}{4}\\[0.5ex]
\raisebox{-0.5ex}{Dolphins} & \raisebox{-0.5ex}{62} & \raisebox{-0.5ex}{159} & \raisebox{-0.5ex}{8}& \raisebox{-0.5ex}{Oregon}& \raisebox{-0.5ex}{10,900} & \raisebox{-0.5ex}{31,180} & \raisebox{-0.5ex}{9}\\[0.5ex]
\raisebox{-0.5ex}{Bomb Train} & \raisebox{-0.5ex}{64} & \raisebox{-0.5ex}{243} &  \raisebox{-0.5ex}{6}& \raisebox{-0.5ex}{ca-HepPh} & \raisebox{-0.5ex}{11,204} & \raisebox{-0.5ex}{117,619}& \raisebox{-0.5ex}{13} \\[0.5ex]
\raisebox{-0.5ex}{Polbooks} & \raisebox{-0.5ex}{105} & \raisebox{-0.5ex}{441} & \raisebox{-0.5ex}{7}& \raisebox{-0.5ex}{Caida} & \raisebox{-0.5ex}{26,475} & \raisebox{-0.5ex}{53,381}& \raisebox{-0.5ex}{17} \\[0.5ex]
\raisebox{-0.5ex}{Hamster} & \raisebox{-0.5ex}{921} & \raisebox{-0.5ex}{4,032} & \raisebox{-0.5ex}{8}& \raisebox{-0.5ex}{Twitter} & \raisebox{-0.5ex}{404,719} & \raisebox{-0.5ex}{713,319}& \raisebox{-0.5ex}{8}\\[0.5ex]
\raisebox{-0.5ex}{Virgili} & \raisebox{-0.5ex}{1,133} & \raisebox{-0.5ex}{5,451}& \raisebox{-0.5ex}{8} & \raisebox{-0.5ex}{Delicious} & \raisebox{-0.5ex}{536,108} & \raisebox{-0.5ex}{1,365,961}& \raisebox{-0.5ex}{14} \\[0.5ex]
\raisebox{-0.5ex}{ca-GrQc} & \raisebox{-0.5ex}{5,242} & \raisebox{-0.5ex}{14,496}& \raisebox{-0.5ex}{17} & \raisebox{-0.5ex}{FourSquare} & \raisebox{-0.5ex}{639,014} & \raisebox{-0.5ex}{3,214,986}& \raisebox{-0.5ex}{4}\\[0.5ex]
\raisebox{-0.5ex}{as20000102} & \raisebox{-0.5ex}{6,474} & \raisebox{-0.5ex}{13,895} & \raisebox{-0.5ex}{9}& \raisebox{-0.5ex}{YoutubeSnap}  & \raisebox{-0.5ex}{1,134,890} & \raisebox{-0.5ex}{2,987,624}& \raisebox{-0.5ex}{20} \\[0.5ex]
\Xhline{2.5\arrayrulewidth}
\end{tabular}
}}
\end{center}
\end{table}

\noindent\textbf{Baselines.}
To better illustrate the effectiveness of our two algorithms, we compare them against the following algorithms. We should stress that all these algorithms are enforced to satisfy the connectivity constraint.

\begin{itemize}[leftmargin=*]
  \item \textsc{Optimum}: find an optimal edge set of size $k$ using brute-force search.
  \item \textsc{Random}: randomly choose $k$ edges.
  \item \textsc{Betweenness}: select $k$ edges with the highest betweenness centrality. The betweenness of an edge here accounts for the number of shortest paths between the target node and other nodes which pass through that edge.
  \item \textsc{Spanning}: select $k$ edges with the highest spanning edge centrality~\cite{MaCh2015}, which is defined as the fraction of spanning trees of a graph that contain a certain edge.
    \item \LaplSolver: pick $k$ edges using Algorithm~\ref{alg:esm} equipped with the approximation technique in~\cite{shan2018improve} plus the connectivity verification process.
\end{itemize}

\noindent\textbf{Parameters.} For algorithms \ApproxiSC and \FastSC, we need to set some parameters. Firstly, smaller values of error parameter $\eps$ in \ApproxiSC and error parameter $\alpha$ in \FastSC provide more accurate results, while larger values result in higher efficiency. We set $\eps=0.005$ and $\alpha=0.05$ in our experiments because as it will be observed in our experiments in Section~\ref{sec:alpha}, they provide a suitable trade-off between accuracy and efficiency. The parameter $\phi$ in \FastSC is the effective resistance diameter of the graph. Since it is inherently difficult to determine its exact value, we instead use the diameter of the graph (reported in Table~\ref{tab:data}), which gives an upper bound on $\phi$. The length $l$ of sampled walks can be determined by Lemma~\ref{lem:maxlen}, which involves the spectral radius of matrix $\PP_{-\{v\}}$, and the ratio of invalid walks $\gamma$. We set the ratio of invalid walks to a relatively small value, namely $0.1\%$. (According to our experiments, $0.1\%$ achieves a good trade-off between effectiveness and efficiency. Otherwise, there is nothing particularly unique about this value.) Computing the spectral radius takes much time for large networks, so we generously estimate it as $\lambda=0.95$.
Unless otherwise specified, the values of the other parameters are set by default when varying a parameter.
\subsection{Effectiveness of Proposed Algorithms}

To evaluate the performance of our algorithms, we compare the quality of the returned edges by the \textsc{Optimum} and \textsc{Random} strategies against ours. The comparison is conducted on four small networks, consisting of two random graphs BA and WS with 50 nodes and two real-world networks, namely Karate and Dolphins. Due to the small size of these networks, we can compare the outcome of our algorithms against the optimal solution (obtained using brute-force search). We select 10 random target nodes. For every network and every $k$ from $1$ to $5$, we run each algorithm 10 times, each time for one of the selected target nodes, and calculate the average final information centrality of target nodes. The lower the obtained value, the better the performance of the algorithm.

The results, depicted in Fig.~\ref{ComOpt}, demonstrate that the performance of the algorithms is consistent across all datasets and that our algorithms perform almost equally to the optimal solutions, reaching parity in the case of the Karate network. On the other hand, the \textsc{Random} scheme consistently underperforms, while the performance of \LaplSolver is similar to that of ours.

\begin{figure}[t]
  \centering
  \includegraphics[width=0.8\columnwidth]{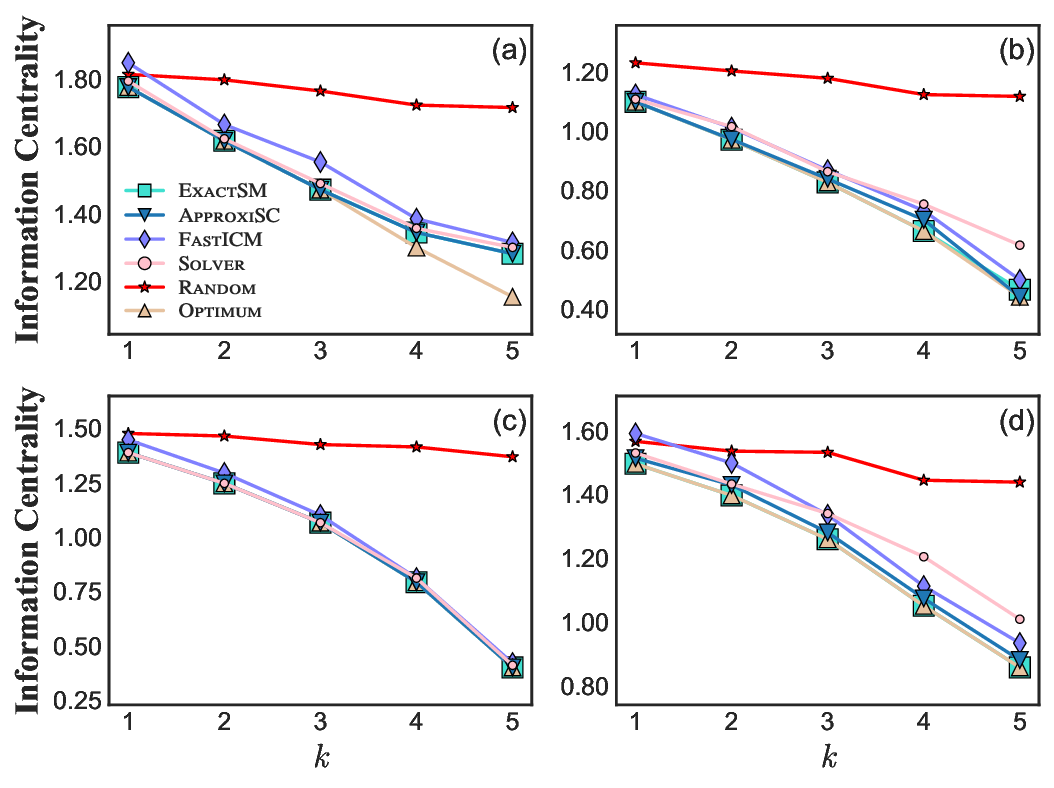} 
  \caption{Average information centrality of target nodes following edge removals, returned by different algorithms on four networks: BA (a), WS (b), Karate (c), and Dolphins (d).\label{ComOpt}}
    \label{fig:comopt}
\end{figure}

As further evidence of the effectiveness of our algorithms, we test them on four larger real-world networks. Since finding optimal solutions through brute-force searching is not feasible due to computational limitations, we compare the results of our algorithms against \textsc{Random}, \textsc{Betweenness}, \textsc{Spanning}, and \LaplSolver algorithms. Similar to above, we randomly select 10 distinct target nodes to mitigate the impact of node positions on the results. We first determine the initial information centrality of each target node and then degrade it by removing up to $k = 10$ edges using our greedy algorithms and four others. After each edge removal, we calculate and record the updated information centrality. Finally, we average over the information centrality of all target nodes in each round and exhibit the results in Fig.~\ref{ComBase}. Our algorithms outperform the other algorithms on all tested networks.
\begin{figure}
\centering
\includegraphics[width=0.8\linewidth]{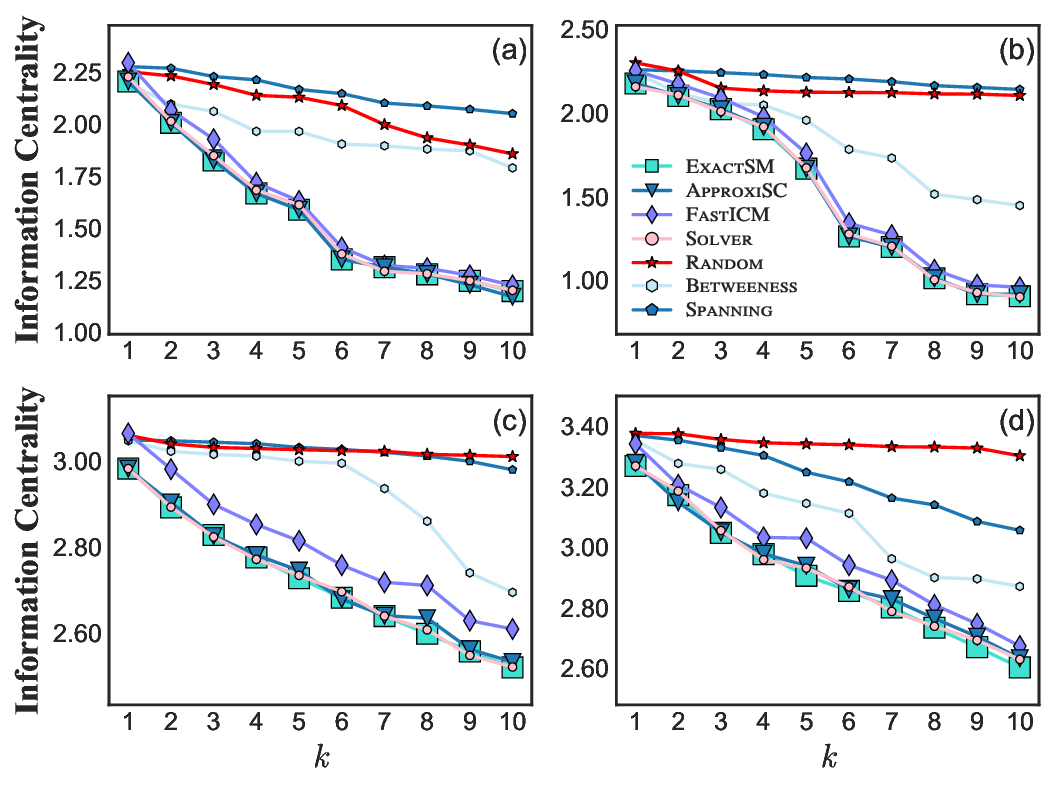}
\caption{Average information centrality of target nodes following edge removals for different algorithms on four networks: Bomb train (a), Virgili (b), ca-GrQc (c), and Erd\"os (d).\label{ComBase} }
\end{figure}
\subsection{Efficiency of Proposed Algorithms}

In the previous section, we observed that our algorithms consistently outperform other algorithms and produce near optimal solutions. Here, we focus on the run time analysis of these algorithms on different networks.

For each network, 20 target nodes are selected randomly, for each of which, $k = 10$ edges are removed to minimize its information centrality using these algorithms. The average value of final information centrality for the targeted nodes and the average run time are reported in Table~\ref{tab:time} for all four algorithms. As expected, the value of information centrality is pretty close for all three algorithms. However, \ApproxiSC and \FastSC are extremely more efficient than the other two algorithms, especially on larger networks. As explained before, \LaplSolver is slower than \ApproxiSC and \FastSC mainly due to its additional connectivity verification cost. \FastSC can handle massive networks, such as FourSquare (with 639,014 nodes) and YoutubeSnap (with 1,134,890 nodes), in only a few hours while \ExactSM and \LaplSolver fail to end before our cut-off time of 24 hours.

\begin{table}[h]
\setlength{\abovecaptionskip}{5.pt}
\setlength{\belowcaptionskip}{-0.cm}

  \caption{The average final information centrality of 20 randomly chosen target nodes and running times for removing $k=10$ edges using $\FastSC$ (FIM), $\ApproxiSC$ (ASC), $\ExactSM$ (ESM) and \LaplSolver (SOL) algorithms on several real-world networks.\label{tab:time}}
  \centering
  \fontsize{7.5}{9}\selectfont
  \begin{threeparttable}
  \setlength{\tabcolsep}{1.2mm}{
    \begin{tabular}{ccccccccc}
    \toprule
    \multirow{2}{*}{Network}&
    \multicolumn{4}{c}{Information Centrality}&\multicolumn{4}{c}{Time (seconds)}\cr
    \cmidrule(lr){2-5} \cmidrule(lr){6-9}
    &FIM&ASC&ESM&SOL&FIM&ASC&ESM&SOL\cr
    \midrule
    Polbooks            &2.667   & 2.661 &2.603 & 2.621 & 3  & 6 & 8 & 20\cr
    Hamster             &2.172   & 2.145 &2.145 & 2.145 & 4  & 6 & 9 & 42\cr
    Virgili             &2.704   & 2.704 & 2.686  & 2.687 & 4  & 9 & 16 & 73\cr
    ca-GrQc             &1.614   & 1.600 & 1.598  & 1.600 & 70  & 83 & 517 & 371\cr
    as20000102          &1.349   & 1.340  & 1.336 & 1.339 & 144  & 194 & 3,175 & 298\cr
    Erd\"os               &1.115   & 1.099 & 1.086 & 1.086 & 281  & 319 & 6,336 & 350\cr
    Oregon              &1.612   & 1.612 & 1.586 & 1.603 & 814  & 912 & 19,357 & 1,568\cr
    ca-HepPh            &2.621   & 2.610 & 2.605 & 2.610 & 412  & 621 & 7,570 & 1,075\cr
    Caida               &1.379   & 1.364  & -    & 1.374   & 2,047   & 2,580 & -      & 4,789  \cr
    Twitter             &1.932   &1.911      &-     &-    & 3,358 & 8,981      &-  &-\cr
    Delicious           &2.012   &1.931       &-  &-       & 7,287 & 25,823      &- &- \cr
    FourSquare          &1.491   &1.401       &- &-       & 19,461 & 68,981      &- &-\cr
    YoutubeSnap         &1.194   &1.111       &- &-      & 11,454 & 57,401      & - &-   \cr
    \bottomrule
    \end{tabular}}
    \end{threeparttable}
    \vspace{-5pt}
\end{table}

\subsection{Impact of Error Parameters}
\label{sec:alpha}
Recall that \ApproxiSC and \FastSC have the error parameters $\eps$ and $\alpha$, respectively. These error parameters balance efficiency and approximation guarantee. We examine their effect on two exemplary datasets, namely Hamster and Virgili. Intuitively speaking, increasing error parameters yields a looser approximation guarantee, which as a result may impact the quality of the selected edge set. Let $\Delta$ be the difference between the final information centrality derived by \ApproxiSC (analogously, \FastSC) and \ExactSM after edge removal. We report in Fig.~\ref{fig:para_alpha} the resulting $\Delta$ after removing 10 edges and the corresponding time consumed for selecting these edges. As expected, smaller $\eps$ (analogously, $\alpha$) yield more accurate results, while larger values demonstrate greater efficiency. The results in Fig.~\ref{fig:para_alpha} should also justify our default choices of $\eps=0.005$ and $\alpha=0.05$ in our experiments since they provide an acceptable balance between the time cost and approximation guarantee. (In these experiments, we have again used 10 randomly chosen target nodes.)

\begin{figure}[h]
\centering
\includegraphics[width=0.8\columnwidth]{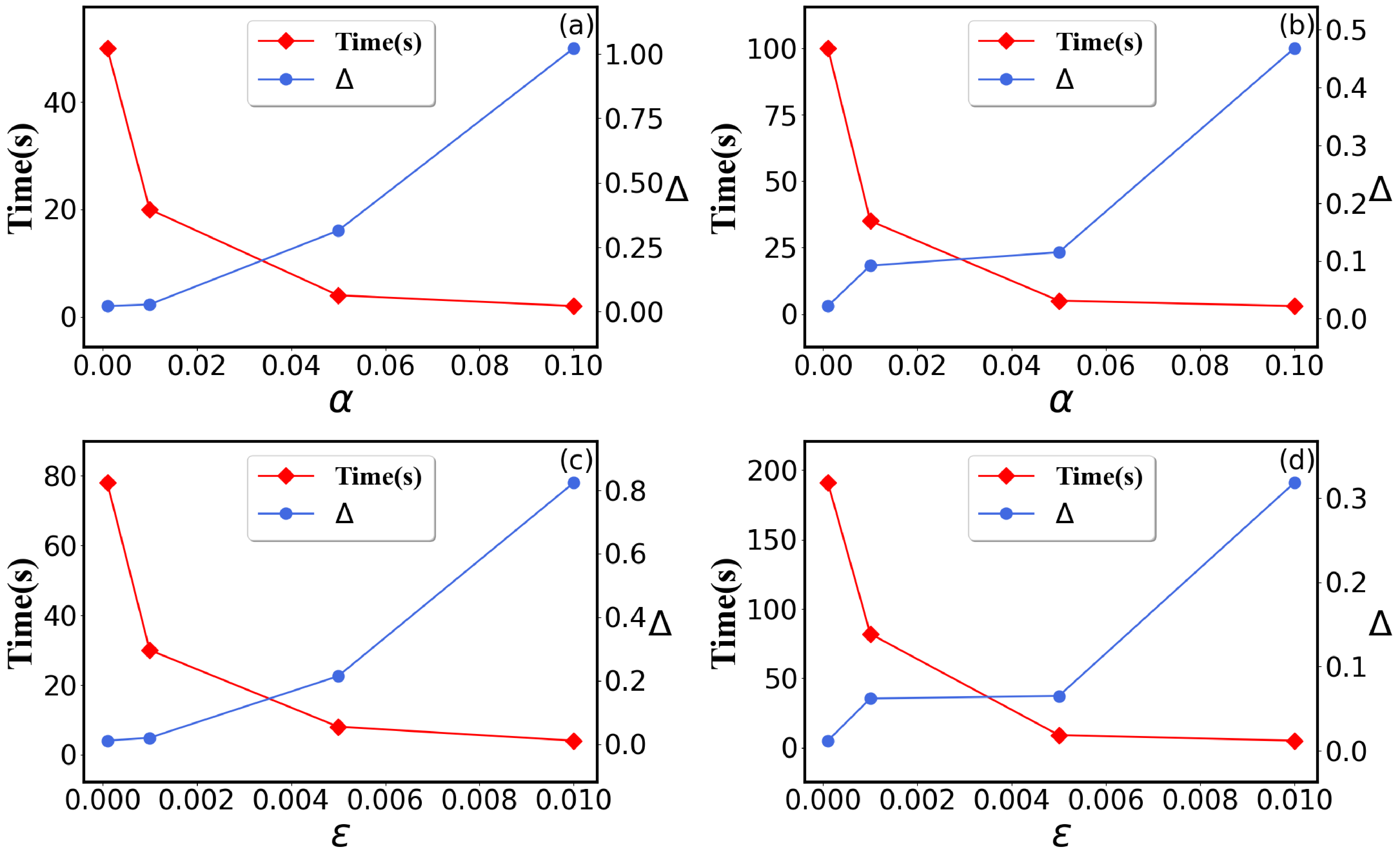}
\caption{Impact of error parameters $\epsilon$ and $\alpha$ in \ApproxiSC and \FastSC algorithms tested on two networks, Hamster(a), and Virgili (b).\label{fig:para_alpha} 
}
\end{figure}

\section{Conclusion and Future Work}\label{sec:conclusion}

Inspired by the imperative of possessing effective mechanisms to mitigate the potential deleterious effects incurred by a compromised/malicious node within a network, we have delved into the problem of moderating critical nodes. We investigated the setup where the goal is to minimize information centrality of a target node by removing $k$ edges while maintaining the network's connectivity. We proved the problem to be NP-complete and its objective function to be monotonically decreasing but non-supermodular. By advancement and development of several novel proof techniques such as random walk-based Schur complement approximation, we provided two fast approximation algorithms and their theoretical analysis. Furthermore, our extensive experiments on various real-world and synthetic networks demonstrated that our proposed algorithms provide solutions very close to optimal in most cases. One of our algorithms, which has a nearly linear run time, can cover networks with over one million nodes on a single machine. Therefore, our algorithms not only permit for a rigorous theoretical analysis, but also perform effectively and efficiently in practice.

We aspire this work to create the necessary grounds for further studies of moderating critical nodes in networks and to be the starting point for a long line of research on this topic. Below, we propose some potential avenues for future research to tackle the limitations of the present work.
\begin{itemize}
    \item \textbf{Connectivity.} We considered the constraint of keeping the underlying network connected while removing edges. Motivated by various real-world applications, it would be important to investigate various notations of connectivity, e.g., $t$-connectivity (for $t\ge 2$), conductance, or algebraic definitions of connectivity.
    \item \textbf{Edge Costs.} In our setup, the cost of removing all edges is the same. However, in the real world, removing some edges might be more costly than the others. Therefore, it would be interesting to investigate the problem when each edge has a cost assigned to it.
    \item \textbf{Multiple Target Nodes.} A natural generalization of the moderation problem is to analysis the setup where one aims to minimize the overall information centrality of several target nodes at once.
    \item \textbf{Weighted Networks.} In practice, edges have weights associated with them. For example, in a social network, an edge weight corresponds to the strength of the social tie between the corresponding two nodes and in an internet network, it could correspond to the bandwidth of the link between two routers. We believe to generalize our algorithms to cover the weighted setup, the introduction of novel ideas and advancement of existing techniques are required.
    \item \textbf{Correlation with diffusion models.} In the future, we will try to find how information centrality correlates with specific diffusion models (cascades, thresholds, etc.)
\end{itemize}


\ifCLASSOPTIONcaptionsoff
  \newpage
\fi

\bibliographystyle{IEEEtran}


\begin{IEEEbiography}
{Changan Liu}
received the B.Eng. degree in School of Software from Dalian University of Technology, Dalian, China, in 2019. He is currently pursuing the Ph.D. degree in School of Computer Science, Fudan University, Shanghai, China. His research interests include network science, computational social science, graph data mining, and social network analysis.
\end{IEEEbiography}
\begin{IEEEbiography}
{Xiaotian Zhou}
received the B.S. degree in mathematics science from Fudan University, Shanghai, China, in 2020. He is currently pursuing the Ph.D. degree in School of Computer Science, Fudan University, Shanghai, China. His research interests include network science, computational social science, graph data mining, and social network analysis.
\end{IEEEbiography}
\begin{IEEEbiography}
{Ahad N. Zehmakan}
received his PhD degree in 2020 from ETH Zurich and is currently an Assistant Professor of Computer Science in the School of Computing at the Australian National University. His research interests include graph and randomized algorithms, information spreading in social networks, random graph models, complexity theory, parallel and distributed computing, and network security.
\end{IEEEbiography}

\begin{IEEEbiography}
{Zhongzhi Zhang}
  (M'19)   received the B.Sc. degree in applied mathematics from Anhui University, Hefei, China, in 1997 and the Ph.D. degree in management science and engineering from Dalian University of Technology, Dalian, China, in 2006. \\
  From 2006 to 2008, he was a Post-Doctoral Research Fellow with Fudan University, Shanghai, China, where he is currently a Full Professor with the School of Computer Science. He has published over 160 papers in international journals or conferences. 
 Since 2019, he has been selected as one of the most cited Chinese researchers
  (Elsevier) every year. 
 His current research interests include network science, graph data mining, social network analysis, computational social science, spectral graph theory, and random walks. \\
  Dr. Zhang was a recipient of the Excellent Doctoral Dissertation Award of Liaoning Province, China, in 2007, the Excellent Post-Doctor Award of Fudan University in 2008, the Shanghai Natural Science Award (third class) in 2013, the CCF Natural Science Award (second class) in 2022, and the Wilkes Award for the best paper published in The Computer Journal in 2019. He is a member of the IEEE.
\end{IEEEbiography}

\end{document}